\newtheorem{theorem}{Theorem}
\newtheorem{prop}{Proposition}
\newtheorem{lemma}{Lemma}
\newtheorem{definition}[theorem]{Definition}
\title{\normalsize\bf%
\uppercase{Dynamical analysis of an HIV infection model including quiescent cells and immune response}
}
\author{%
Ibrahim Nali$^{1*}$,  Attila Dénes$^{1,2} $, Abdessamad Tridane$^{3}$ and Xueyong Zhou$^{4}$ 
}
\begin{document}

\date{}

\maketitle

\vspace{-0.5cm}

\begin{center}
{\footnotesize 
*Corresponding author\\
$^1$Bolyai Institute, University of Szeged, Szeged 6720, Hungary \\
$^2$National Laboratory for Health Security, Hungary \\
$^3$Emirates Center for Mobility Research, United Arab Emirates University, Al Ain 15551, United Arab Emirates \\
$^4$School of Mathematics and Statistics, Xinyang Normal University,Xinyang 464000, Henan, China\\
E-mails: ibrahim.nali@usmba.ac.ma
}
\end{center}
\bigskip
\noindent
\abstract{
This research gives a thorough examination of an HIV infection model that includes quiescent cells and immune response dynamics in the host. The model, represented by a system of ordinary differential equations, captures the complex interaction between the host's immune response and viral infection. The study focuses on the model's fundamental aspects, such as equilibrium analysis, computing the basic reproduction number $\mathcal{R}_0$, stability analysis, bifurcation phenomena, numerical simulations, and sensitivity analysis.

The analysis reveals both an infection equilibrium, which indicates the persistence of the illness, and an infection-free equilibrium, which represents disease control possibilities. Applying matrix-theoretical approaches, stability analysis proved that the infection-free equilibrium is both locally and globally stable for $\mathcal{R}_0 < 1$. For the situation of $\mathcal{R}_0 > 1$, the infection equilibrium is locally asymptotically stable via the Routh--Hurwitz criterion. We also studied the uniform persistence of the infection, demonstrating that the infection remains present above a positive threshold under certain conditions. The study also found a transcritical forward-type bifurcation at $\mathcal{R}_0 = 1$, indicating a critical threshold that affects the system's behavior. The model's temporal dynamics are studied using numerical simulations, and sensitivity analysis identifies the most significant variables by assessing the effects of parameter changes on system behavior.
}
\medskip
\noindent
{\small{\bf Keywords}{: Virus dynamics; Quiescent cells; Immune response; Stability analysis; Lyapunov function, Bifurcation.} 

}

\baselineskip=\normalbaselineskip

\section{Introduction}\label{sec:1}
Human immunodeficiency virus (HIV) affects millions of individuals worldwide. The body's fight against infections depends heavily on CD4+ T cells, which are the precise target of this substance. Understanding the intricate dynamics of HIV infection has been made possible by the use of within-host models \cite{BOCHAROV, Dorratoltaj, Perelson}. These models offer a framework for researching the complex relationships between various immune system components, viral replication, and the disease's course in an infected person. Quiescent cells, or resting cells, are an essential component of within-host models and play a major role in the latency and persistence of HIV infection \cite{Alizon, DeBoer}. These cells are essential in determining the long-term course of HIV infection because they act as reservoirs for the virus even in the midst of antiretroviral therapy. Researchers can learn more about the dynamics of quiescent cells and how they interact with the immune system by including them in the models \cite{pearce}. Additionally, the immunological response, which involves intricate interactions between immune cells, antibodies, and cytokines, plays a vital role in combating HIV. The way the infection progresses depends on the dynamic interaction of the virus, immune system, and quiescent cells, which affects viral load, reservoir formation, and disease development rates \cite{fauci}. 

Moreover, although they have ineffective infection rates, dormant CD4+ T cells have been identified as possible contributors to the viral reservoir \cite{vatakis}. These cells are essential to the dynamics of HIV, and the way they interact with the immune system creates a complicated picture. Prior research indicates that immunological quiescence, defined as low immune activation \cite{card}, may function as a barrier to HIV infection by reducing the number of target cells available. Nevertheless, substantial numbers of improperly processed viral ends and abortive circles are involved in the integration of HIV into quiescent CD4+ T cells \cite{vatakis1}, impeding the infection process. Potential targets for treatment development are provided by the identification of particular biological proteins in quiescent cells that prevent HIV infection\cite{zack}.

The influence of viral variety and its consequences on treatment results and disease progression is another important aspect of HIV infection \cite{Lattaide}. HIV's rapid replication rate and error-prone replication process result in considerable genomic diversity. Due to this diversity, an infected human may contain several virus strains, often known as quasispecies \cite{Liu, Shan}. The virus can adapt and elude immune responses due to the constant production of new versions, which makes the development of antiretroviral treatments and vaccines difficult. Mathematical models that incorporate viral variety provide useful insights into the evolutionary dynamics of HIV and its repercussions on the host-virus interaction, various models have been introduced \cite{marcuk,Iwami,warmik, shwartz,Veronica}. With the aid of these models, researchers can investigate methods to reduce the viral variety and its possible effects on treatment efficacy, as well as the formation and spread of drug-resistant strains and the influence of immune selection pressure. To improve therapeutic approaches and patient outcomes in HIV infection, it is crucial to comprehend the intricate interactions among immune responses, treatment results, and viral variety.

We present and assess a model that builds on Kouche et al.~\cite{Kouche} and Pang et al. \cite{pang}'s advances in the field. Kouche et al.~\cite{Kouche} updated Guedj et al.~\cite{Guedj}'s model, taking into account RT inhibitors and quiescent cells. Their findings showed that increasing the drug's potency or extending its active period could help eliminate the infection faster. Pang et al.~\cite{pang} updated Nowak et al.'s model \cite{Nowak} by utilizing a Holling type 2 function to classify immunological responses. Their contributions  served as a motivation for our research, encouraging a thorough dynamical analysis of an HIV infection model that includes quiescent cells and the immune response. We hope that our research will provide useful insight into the complex interaction between the virus and the host's defense mechanisms. Given the scarcity of works on the mathematical analysis of HIV dynamic systems that include the quiescent cell state.

Our paper is organized as follows: in the next section, we present our proposed model, which incorporates the dynamics of HIV infection, quiescent cells, and the immune response. We thoroughly examine the mathematical well-posedness of the model in Section 3. Section 4 is dedicated to the study of equilibria, where we analyze the Infection-free equilibrium and calculate the reproduction number. Subsequently, in the following section, we delve into the stability analysis of both the Infection-free and infection equilibria, shedding light on the long-term behavior of the system. Furthermore, we investigate the existence of transcritical bifurcation, in the subsequent section. In Section 6, we present our numerical analysis to validate our theoretical findings. Finally, we conclude the paper with a concise discussion, summarizing our key findings, highlighting the implications of our research, and identifying potential avenues for future exploration.

\section{Model derivation}
The model is determined by the following system of equations:
\begin{equation}
\label{1}
\begin{aligned}
&\frac{d Q}{d t}=\Lambda+\varrho T-\sigma Q-\mu_1 Q, \\
&\frac{d T}{d t}=\sigma Q-\beta T V-\varrho T-\mu_2 T, \\
&\frac{d I}{d t}=\beta T V-\mu_3 I-p I Z, \\
&\frac{d V}{d t}=\xi I-\mu_4 V, \\
&\frac{d Z}{d t}=b+\frac{c I Z}{\kappa+I}-\mu_5 Z.
\end{aligned}
\end{equation}
This is a within-host model represented by a system of ordinary differential equations, where the state variables represent different components of the host immune response to a viral infection. The variables and their corresponding meanings are as follows:
\begin{itemize}
    \item[$Q$:] quiescent cells, representing a dormant or inactive population of cells;
    \item[$T$:] healthy activated cells, representing cells that have become activated in response to the viral infection;
    \item [$I$:]  infected cells, representing cells that have been infected by the virus;
    \item [$V$:] free virus, representing the number of virus particles circulating in the host;
    \item [$Z$:] CTL (cytotoxic T-lymphocyte) cells, representing the number of immune cells that specifically target and kill infected cells.
\end{itemize}
We denote by $\Lambda$ the rate of influx of quiescent cells, $\varrho$ stands for the rate of transition of healthy activated cells back to quiescent cells, and $\sigma$ for the rate of transition of quiescent cells to healthy activated cells. We introduce the notations $\mu_1$, $\mu_2$, and $\mu_3$ for the natural death rates of quiescent cells, healthy activated cells, and infected cells, respectively. The notation $\beta$ represents the rate of infection of healthy activated cells by free virus. The rate of killing of infected cells by CTL cells is denoted by $p$, while $\xi$ denotes the rate of production of free virus from infected cells. We introduce $\mu_4$ for the rate of clearance of free virus and $\mu_5$ for the natural death rate of CTL cells. For the immune response equation, $b$ is the parameter representing the rate of production of CTL cells, and $c$ denotes the rate at which the CTLs are stimulated and activated in response to the presence of infected cells. To model the activation and proliferation of CTL cells, we use the function $\frac{c I Z}{\kappa + I}$, where $\kappa$ is the virus load required for half-maximal CTL cell stimulation \cite{Gennady}. This function reflects both antigenic stimulation and the export of specific precursor CTL cells from the thymus, providing a more accurate representation of immune dynamics. Unlike the traditional bilinear term used in earlier models \cite{pang}, this saturating function better captures the immune system's behavior by considering non-cytolytic mechanisms, where CTL cells control the infection without directly killing infected cells. The parameter $\kappa$ plays a crucial role in balancing the immune response, reflecting how CTL stimulation adjusts according to the viral load \cite{Gennady}.

\begin{center}
\begin{table}[h!]%
\label{tab1}
\centering
\caption{Parameters of  model \eqref{1}.\label{tab1}}%
\begin{tabular*}{370pt}{@{\extracolsep\fill}cllll@{\extracolsep\fill}}
\toprule
\textbf{Parameter} & \textbf{Definition}   \\
\midrule
$\Lambda$ & {Rate of influx of quiescent cells}     \\
$\varrho$  & \text{Rate of transition of healthy activated cells back to quiescent cells}     \\
$\sigma$  & \text{Rate of transition of quiescent cells to healthy activated cells}    \\
$\mu_1$ & \text{Natural death rate of quiescent cells} \\
$\beta$ & \text{Rate of infection of healthy activated cells by free virus} \\
$\mu_2$ & \text{Natural death rate of healthy activated cells} \\
$\mu_3$  &  \text{Rate of death of infected cells} \\
$p$ & \text{Rate of killing of infected cells by CTL cells } \\
$\xi$  &  \text{Rate of production of free virus from infected cells} \\ 
$\mu_4$  & \text{Rate of clearance of free virus} \\
$b$ &  \text{Rate of production of CTL cells} \\
$c$ & \makecell[l]{Rate at which the CTLs are stimulated and activated\\ in response to the presence of infected cells}\\
$\kappa$ & \text{Represents the virus load for half-maximal CTL cell stimulation.}\\
$\mu_5$ & \text{Natural death rate of CTL cells} \\
\bottomrule
\end{tabular*}
\end{table}
\end{center}
The parameters of the model and their definitions are summarized in Table \ref{tab1}, while
the flow chart depicting the within-host dynamics of the model is shown in Figure  \ref{fig:flowchart}.
\begin{figure}[h!]
 \centerline{\includegraphics[scale=0.5]{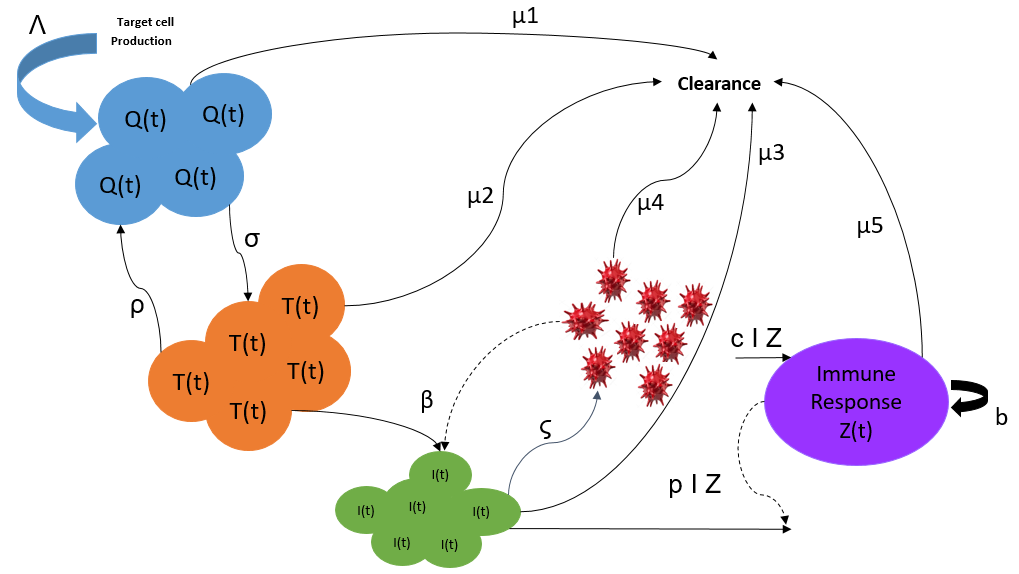}}
    \caption{Flow chart of the model \eqref{1}}
    \label{fig:flowchart}
\end{figure}

\section{Positivity and boundedness of the solutions}
We  begin our analysis of system \eqref{1} by examining some fundamental properties of the model. Firstly, we will show that any solution of \eqref{1}, initiated from a non-negative initial condition in $\mathbb{R}_{+}^{5}$, will remain non-negative. Specifically, we can see that
$$
\begin{aligned}
\left.\frac{d Q}{d t}\right|_{Q=0} &=\Lambda+\varrho T >0 \quad \text{for all } T \geq 0,\\
\left.\frac{d T}{d t}\right|_{T=0} &= \sigma Q \geq 0 \quad \text{for all } Q \geq 0, \\
\left.\frac{d I}{d t}\right|_{I=0} &= \beta T V \geq 0 \quad \text{for all } T, V \geq 0, \\ \left.\quad \frac{d V}{d t}\right|_{V=0}&= \xi I \geq 0 \quad \text{for all } I \geq 0, \\
\left.\frac{d Z}{d t}\right|_{Z=0} &= b > 0.
\end{aligned}
$$
This proves that $\mathbb{R}_{+}^{5}$ is positively invariant with respect to system \eqref{1}, meaning that any solution of \eqref{1} will remain in $\mathbb{R}_{+}^{5}$ for all times.

We aim to show that the solutions of the system are bounded, under the assumption  $\mu_5 \geq c$. Let $W(t)=Q(t)+T(t)+ I(t)+\frac{\mu_3}{2 \xi} V(t)+Z(t)$, the derivative of which can be computed as
$$
\begin{aligned}
\dot{W} & =\dot{Q}+\dot{T}+\dot{I}+\frac{\mu_3}{2 \xi} \dot{V}+\dot{Z} \\
& =\Lambda+b-\mu_1 Q-\mu_2 T-\frac{\mu_3}{2} I-p I Z-\frac{\mu_3 \mu_4}{2 \xi} V+\frac{c I Z}{\kappa+I}-\mu_5 Z \\
& \leq \Lambda+b-\mu_1 Q-\mu_2 T-\frac{\mu_3}{2} I-\frac{\mu_3 \mu_4}{2 \xi} V+\left(c-\mu_5\right) Z \\
& \leq \Lambda+b-m\left(Q+T+I+\frac{\mu_3}{2 \xi} V+Z\right)=\xi-m W,
\end{aligned}
$$
where we  define $m=\min  \left\{\mu_1,\mu_2,\frac{\mu_3}{2},\mu_4,\mu_5-c \right\}$ and $\xi=\Lambda+b$. In particular, the inequality uses the fact that $ p, \kappa, Z(t)$ and $I(t) $ are all nonnegative.

We can rewrite the inequality as $\dot{W}+m W \leq \xi$, which is a linear ordinary differential inequality with a negative coefficient for $W$. Hence, by applying the integrating factor $e^{m t}$, we get:
$$
W(t) \leq e^{-m t}\left(W(0)-\frac{\xi}{m}\right)+\frac{\xi}{m}.
$$
 This implies that $0 \leq W(t) \leq M_1$, where $M_1=\frac{\xi}{m}$, and thus $0 \leq Q(t), T(t), I(t),Z(t) \leq M_1$ and $0 \leq V(t) \leq M_2$ for all $t \geq 0$, provided that the initial conditions satisfy $Q(0)+T(0)+ I(0)+\frac{\mu_3}{2 \xi} V(0)+Z(0) \leq M_1$. Here, we have defined $M_2=\frac{2 \xi M_1}{\mu_3}$.Thus, we have demonstrated that the solutions are bounded. Consequently, we can assert that the set
$$
\mathcal{D} := \left\{ (Q,T, I, V,Z) \in \mathbb{R}^5_+ \mid Q+T + I+Z \leq 3 M_1, \; V \leq M_2 \right\}
$$
is  positively invariant.
\section{Equilibria, reproduction number}
\subsection{Infection-free equilibrium point}
To find the equilibria of \eqref{1}, one has to solve the algebraic system of equations
\begin{equation}\label{algeq}
\begin{split}
        0&= \Lambda+\varrho T^*-\sigma Q^*-\mu_1 Q^*, \\
 0&=\sigma Q^*-\beta T^* V^*-\varrho T^*-\mu_2 T^*, \\
0& =\beta T^* V^*-\mu_3 I^*-p I^* Z^*, \\
0& =\xi I^*-\mu_4 V^*, \\
0&=b+\frac{c I^* Z^*}{\kappa+I^*}-\mu_5 Z^*.
\end{split}
\end{equation}
The unique infection-free equilibrium of \eqref{1} can  be calculated as
$$\mathcal{E}_0=\left(\frac{\Lambda  (\mu_2+\varrho )}{\mu_1 (\mu_2+\varrho )+\mu_2 \sigma },\frac{\Lambda  \sigma }{\mu_1 (\mu_2+\varrho )+\mu_2 \sigma },0,0,\frac{b}{\mu_5}\right).$$

 \subsection{Basic reproduction number \texorpdfstring{($\mathcal{R}_0$)}{R0}}
To determine the basic reproduction number ($\mathcal{R}_0$), we use the next-generation matrix method (see e.g.\ \cite{diekmann}). The transmission matrix $f$  and  the transition matrix $v$  are given by
$$
f=\left[\begin{array}{c}
\beta T V \\
0
\end{array}\right] \text {\quad and\quad} v=\left[\begin{array}{c}
-\mu_3 I-p I Z \\
\xi I-\mu_4 V
\end{array}\right].
$$
The basic reproduction number is obtained from the spectral radius (dominant eigenvalue) of $F V_1^{-1}$, where $F$ and $V_1$ are the Jacobian matrices of $f$ and $v$ evaluated at the infection-free equilibrium point.

The matrices $F$ and $V_1$ for the model are given by
\begin{align*}
F&=\left(
\begin{array}{cc}
 0 & \frac{\beta  \Lambda  \sigma }{\mu_1 (\mu_2+\varrho )+\mu_2 \sigma } \\
 0 & 0 \\
\end{array}
\right)
\shortintertext{and}
V_1&=
\begin{pmatrix}
 -\frac{b p+\mu_3 \mu_5}{\mu_5} & 0 \\
 \xi & -\mu_4 \\
\end{pmatrix},
    \end{align*}
hence,
$$
FV^{-1}=\left(
\begin{array}{cc}
 -\frac{\beta  \Lambda  \mu_5 \sigma  \xi}{\mu_4 (b p+\mu_3 \mu_5) (\mu_1 (\mu_2+\varrho )+\mu_2 \sigma )} & -\frac{\beta  \Lambda  \sigma }{\mu_1 \mu_4 (\mu_2+\varrho )+\mu_2 \mu_4 \sigma } \\
 0 & 0 \\
\end{array}
\right),
$$
and the spectral radius of this matrix can be calculated as
$$
\mathcal{R}_{0}= \frac{\beta  \Lambda  \mu_5  \sigma  \xi}{\mu_4 (b p+\mu_3 \mu_5) (\mu_1 (\mu_2+\varrho )+\mu_2 \sigma )}.
$$
\subsection{Infection equilibrium}
By solving the system of algebraic equations to find infection equilibria of \eqref{algeq}, one obtains
\begin{equation}\label{equilibrium}
\begin{split}
V^*&=\frac{\xi}{\mu_4} I^*, \\
Z^*&=\frac{b(\kappa+I^*)}{\mu_5 \kappa + I^* (\mu_5-c)},\\
T^*&=\frac{\mu_3 \mu_4}{\beta \xi}+ \frac{b p \mu_4 (\kappa+I^*)}{(\mu_5 \kappa + I^* (\mu_5-c)) \beta \xi}, \\
Q^*&=\frac{1}{\mu_1 +\sigma } \left(\varrho \left[ \frac{\mu_3 \mu_4}{\beta \xi} + \frac{b p \mu_4 (\kappa+I^*)}{(\mu_5 \kappa + I^* (\mu_5-c)) \beta \xi} \right] +  \Lambda \right).
\end{split}
\end{equation}
If we replace these expressions into the second equation of \eqref{algeq}, we get
\begin{equation}
A I^{* 2}+B I^*+C=0,
\end{equation}
where
$$
\begin{aligned}
A={} & -\beta  \xi (\mu_1+\sigma ) (b p+\mu_3 (\mu_5-c)), \\
B= {}& -\beta  \xi (b \kappa  p (\mu_1+\sigma )+\Lambda  \sigma  (c-\mu_5)+\kappa  \mu_3 \mu_5 (\mu_1+\sigma ))\\&+\mu_4 (\mu_1 (\mu_2+\varrho )+\mu_2 \sigma ) (\mu_3 (c-\mu_5)-bp), \\
={}&\mu_4 (\mathcal{R}_{0}-1) (b p+\mu_3 \mu_5) (\mu_1 (\mu_2+\varrho )+\mu_2 \sigma )-\beta  \xi (\kappa  (\mu_1+\sigma ) (b p+\mu_3 \mu_5)+c \Lambda  \sigma )\\&+c \mu_3 \mu_4 (\mu_1 (\mu_2+\varrho )+\mu_2 \sigma ),\\
C={}& \beta  \kappa  \Lambda  \mu_5 \sigma  \xi-\kappa  \mu_4 (b p+\mu_3 \mu_5) (\mu_1 (\mu_2+\varrho )+\mu_2 \sigma )\\= {}&\kappa \mu_4 (b p+\mu_3 \mu_5) (\mu_1 (\mu_2+\varrho )+\mu_2 \sigma )(\mathcal{R}_{0} - 1 ).
\end{aligned}
$$ 
We assume that $\mu_5 > c$ and $\mathcal{R}_{0} > 1$, which implies that $B^2-4 A C > 0$. Under these conditions, the equation has two real roots given by
\begin{align*}
 I^*_{-}& = \frac{- B + \sqrt{B^2-4 A C}}{2 A}, \\
 I^*_{+}& = \frac{- B - \sqrt{B^2-4 A C}}{2 A}. 
 \end{align*}

Additionally, since $I^*_{+} \times I^*_{-} = \frac{\kappa \mu_4 (1-\mathcal{R}_{0}) (b p+\mu_3 \mu_5) (\mu_1 (\mu_2+\varrho )+\mu_2 \sigma )}{\beta \xi (\mu_1+\sigma ) (b p+\mu_3 (\mu_5-c))} < 0$, it follows that $I^*_{+}$ and $I^*_{-}$ have opposite signs, and It is clear that $I^*_{+} > 0$. Hence the system \eqref{1} has a unique positive equilibrium 
$$\mathcal{E}_1= (Q^*,T^*,I^*,V^*,Z^*).$$
Alternatively, if $\mu_5 < c < \frac{b p}{\mu_3} + \mu_5$ and $\mathcal{R}_{0} > 1$ , additional conditions are required to ensure equilibrium positivity. Under this condition, we enforce $I^*_+ < I_m =\frac{\mu_5 \kappa}{c-\mu_5}$. This guarantees the existence of a unique positive equilibrium $\mathcal{E}_1 = (Q^*,T^*,I^*,V^*,Z^*)$ in  system \eqref{1}.
\section{Stability analysis}
\subsection{Local stability}
In this subsection, we will examine the local stability of the equilibria  identified in the previous analysis. Our focus will be on the stability of the infection-free equilibrium $\mathcal{E}_0$. The Jacobian matrix of system $\eqref{1}$ evaluated at the infection-free equilibrium is given by
$$ J = 
\begin{pmatrix}
 -\mu_1-\sigma  & \varrho  & 0 & 0 & 0 \\
 \sigma  & -\mu_2-\varrho  & 0 & -\frac{\beta  \Lambda  \sigma }{\mu_1(\mu_2+\varrho )+\mu_2 \sigma } & 0
   \\
 0 & 0 & -\frac{b p}{\mu_5}-\mu_3 & \frac{\beta  \Lambda  \sigma }{\mu_1 (\mu_2+\varrho )+\mu_2
   \sigma } & 0 \\
 0 & 0 & \xi & -\mu_4 & 0 \\
 0 & 0 & \frac{b c}{\kappa  \mu_5} & 0 & -\mu_5 \\
\end{pmatrix}.
 $$
The eigenvalues of the Jacobian matrix can be calculated as
\begin{align*}
\lambda_1 &= -\mu_5, \\
\lambda_2 &= \frac{1}{2} \left(-\sqrt{2 \sigma (\mu_1-\mu_2+\varrho)+(-\mu_1+\mu_2+\varrho)^2+\sigma^2}-\mu_1-\mu_2-\sigma-\varrho\right), \\
\lambda_3 &= \frac{1}{2} \left(\sqrt{2 \sigma (\mu_1-\mu_2+\varrho)+(-\mu_1+\mu_2+\varrho)^2+\sigma^2}-\mu_1-\mu_2-\sigma-\varrho\right), \\
\lambda_4 &= -\frac{\frac{\sqrt{4 \beta\Lambda\mu_5^2\sigma\xi+(\mu_1(\mu_2+\varrho)+\mu_2\sigma)(bp+\mu_5(\mu_3-\mu_4))^2}}{\sqrt{\mu_1(\mu_2+\varrho)+\mu_2\sigma}}+bp+\mu_5(\mu_3+\mu_4)}{2\mu_5}, \\
\lambda_5 &= -\frac{-\frac{\sqrt{4 \beta\Lambda\mu_5^2\sigma\xi+(\mu_1(\mu_2+\varrho)+\mu_2\sigma)(bp+\mu_5(\mu_3-\mu_4))^2}}{\sqrt{\mu_1(\mu_2+\varrho)+\mu_2\sigma}}+bp+\mu_5(\mu_3+\mu_4)}{2\mu_5}.
\end{align*}
   It is noted that these eigenvalues hold significant information about the stability of the system, with negative eigenvalues indicating stability at the equilibrium.

Based on the expressions
\begin{multline*}
    (\mu_1+\mu_2+\sigma +\varrho )^2- \left( \sqrt{2 \sigma (\mu_1-\mu_2+\varrho )+(-\mu_1+\mu_2+\varrho )^2+\sigma ^2 } \right)^2\\ = 4 (\mu_1 (\mu_2+\varrho )+\mu_2 \sigma ) >0
    \end{multline*}
    and 
\begin{multline*}
   (b p+\mu_5 (\mu_3+\mu_4))^2-\left(\tfrac{\sqrt{4 \beta \Lambda \mu_5 ^2 \sigma \xi+(\mu_1 (\mu_2+\varrho )+\mu_2 \sigma ) (b p+\mu_5 (\mu_3-\mu_4))^2}}{\sqrt{\mu_1 (\mu_2+\varrho )+\mu_2 \sigma }} \right)^2\\
   = 4 \mu_4 \mu_5 (1-\mathcal{R}_{0}) (b p+\mu_3 \mu_5),
\end{multline*}
it is evident that all the eigenvalues are negative if the basic reproduction number $\mathcal{R}_{0}$ is less than 1. This observation implies that the infection-free equilibrium solution is asymptotically stable if $\mathcal{R}_{0} < 1$.
 
 We now analyze the stability of the infection equilibrium. The Jacobian matrix of system \eqref{1} at this equilibrium is given by
 $$ J_1 = 
\begin{pmatrix} -\mu_1-\sigma  & \varrho  & 0 & 0 & 0 \\
 \sigma  & -\mu_2-\varrho - \beta V^*  & 0 & - \beta T^* & 0
   \\
 0 & \beta V^* & -p Z^*-\mu_3 & \beta T^* & - p I^* \\
 0 & 0 & \xi & -\mu_4 & 0 \\
 0 & 0 & \frac{c Z^* \kappa}{(\kappa+I^*)^2} & 0 & \frac{c I^*}{\kappa+I^*}-\mu_5 \\
\end{pmatrix}.
 $$
 To simplify our analysis, we employ symbolic computation by reducing the number of parameters. At the equilibrium point $\mathcal{E}_1$, the relation $\mu_3 + p Z^* = \frac{\beta T^* \xi}{\mu_4}$ holds. Using this, we define the  transformations $\alpha_1 = \frac{\beta T^* \xi}{\mu_4}$, $\alpha_2 = \varrho + \mu_2 + \beta V^*$, $\alpha_3 = p I^*$, $\alpha_4 = \frac{c Z^* \kappa}{(\kappa + I^*)^2}$, $\alpha_5 = \mu_5 - \frac{c I^*}{\kappa + I^*}$, and $\alpha_6 = \beta V^*$ to obtain 
 $$ J_2 = 
\begin{pmatrix}
 -\mu_1-\sigma  & \varrho  & 0 & 0 & 0 \\
 \sigma  & - \alpha_2  & 0 & - \alpha_1 \frac{\mu_4}{\xi} & 0
   \\
 0 & \alpha_6 & - \alpha_1 & \alpha_1 \frac{\mu_4}{\xi} & - \alpha_3 \\
 0 & 0 & \xi & -\mu_4 & 0 \\
 0 & 0 & \alpha_4 & 0 & -\alpha_5 
\end{pmatrix}.
$$
  The characteristic equation of $J_2$ is given by 
  $$
\lambda^5+\xi_1 \lambda^4+ \xi_2 \lambda^3+\xi_3 \lambda^2+\xi_4 \lambda + \xi_5=0,
$$
where
$$
\begin{aligned}
\xi_1= {}& \alpha_5 + \mu_4 + \alpha_l + \alpha_2 + \sigma + \mu_1,
 \\
\xi_2= {}& \alpha_1 \alpha_2+\alpha_1 \alpha_5+\alpha_1 \mu_1+\alpha_1 \sigma+\alpha_2 \alpha_5 
+\alpha_2 \mu_1+\alpha_2 \mu_4+\alpha_3 \alpha_4+\alpha_5 \mu_1 \\ 
&+\alpha_5 \mu_4+\alpha_5 \sigma+\mu_1 \mu_4+\mu_4 \sigma+\sigma \left(\alpha_2- \varrho \right), \\
\xi_3= {}& \alpha_1 \alpha_2 \alpha_5 + \alpha_1 \alpha_2 \mu_1  + \alpha_1 \alpha_5 \mu_1 + \alpha_1 \alpha_5 \sigma + \alpha_1 \alpha_6 \mu_4 + \alpha_2 \alpha_3 \alpha_4 +\alpha_2 \alpha_5 \mu_1 \\
&+ \alpha_2 \alpha_5 \mu_4 +  \alpha_2 \mu_1 \mu_4 + \alpha_3 \alpha_4 \mu_1 + \alpha_3 \alpha_4 \mu_4 + \alpha_3 \alpha_4 \sigma + \alpha_5 \mu_1 \mu_4 + \alpha_5 \mu_4 \sigma \\&+ \sigma \left(\alpha_1+\alpha_4+\alpha_4 \right) \left( \alpha_2-\varrho \right) ,\\
\xi_4 = {}&  \alpha_1 \alpha_2 \alpha_5 \mu_1 + \alpha_1 \alpha_5 \alpha_6 \mu_4 + \alpha_1 \alpha_6 \mu_1 \mu_4 + \alpha_1 \alpha_6 \mu_4 \sigma + \alpha_2 \alpha_3 \alpha_4 \mu_1 + \alpha_2 \alpha_3 \alpha_4 \mu_4 \\
&+ \alpha_2 \alpha_5 \mu_1 \mu_4 + \alpha_3 \alpha_4 \mu_1 \mu_4 +\alpha_3 \alpha_4 \mu_4 \sigma + \sigma \left(\alpha_1 \alpha_5+\alpha_3 \alpha_4+ \mu_4\alpha_5 \right) \left( \alpha_2-\varrho \right),\\
\xi_5 ={} & \alpha_1 \alpha_5 \alpha_6 \mu_1 \mu_4 + \alpha_1 \alpha_5 \alpha_6 \mu_4 \sigma + \alpha_2 \alpha_3 \alpha_4 \mu_1 \mu_4 + \sigma \alpha_3 \alpha_4 \mu_4 \left(\alpha_2 - \varrho \right).
\end{aligned}
$$
Since $\alpha_2 - \varrho > 0$, it follows that $\xi_2, \xi_3, \xi_4, \xi_5 > 0$.

 Algebraic calculations show that
$$
\xi_1 \xi_2 - \xi_3 = \Phi_1 + \sigma (\alpha_2 + \mu_1 + \sigma)(\alpha_2 - \varrho) + \alpha_1 \mu_4 (\alpha_2 - \alpha_6).
$$
Furthermore,
$$
\begin{aligned}
\MoveEqLeft[2] \xi_3-\frac{\xi_l \left(\xi_5-\xi_1  \xi_4\right)}{\xi_3-\xi_1 \cdot \xi_2} \\={} & \frac{1}{\Phi_2+ \alpha_1 \alpha_4 \left(\alpha_2-\alpha_6\right)+\sigma \left(\alpha_2+\mu_1+\sigma\right)\left(\alpha_2-\varrho\right)} \\ & \times
(\Phi_3 + \left(\alpha_2-\alpha_6\right) (\alpha_1^2\alpha_2\alpha_5\mu_4+\alpha_1^2\alpha_2\mu_1\mu_4+\alpha_1^2\alpha_5^2\mu_4 \\ &+2\alpha_1^2\alpha_5\mu_1\mu_4+2\alpha_1^2\alpha_5\mu_4\sigma +\alpha_1^2\alpha_6\mu_4+\alpha_1^2 \mu_1^2 \mu_4 + \Phi_4 \\
& + \sigma\left(\alpha_2-\varrho\right) (\alpha_1^3 \alpha_2 + \alpha_1^3\mu_1+\alpha_1^3\sigma+\alpha_1^2\alpha_2^2+\alpha_1^2\alpha_2\alpha_5+3\alpha_1^2\alpha_2\mu_4+2 \alpha_1^2\alpha_2\sigma\\&+\alpha_1^2\alpha_5\mu_1\sigma  + \alpha_1^2\alpha_5\sigma+\alpha_1^2\mu_1^2+3 \alpha_1^2\mu_1\mu_4+2\alpha_1^2\mu_1\sigma+3\alpha_1^2\mu_4\sigma+ \Phi_5 )).
\end{aligned}
$$
 Detailed but lengthy calculations (see Appendix) confirm the positivity of $\Phi_1$, $\Phi_2$, $\Phi_3$, $\Phi_4$  and $\Phi_5$. Although the expressions are extensive, we have thoroughly verified that they are positive. Moreover, since $\alpha_2 - \alpha_6 = \varrho + \mu_2 > 0$, we conclude that $\xi_1 \xi_2 - \xi_3 > 0$ and $\xi_3-\frac{\xi_l \left(\xi_5-\xi_1  \xi_4\right)}{\xi_3-\xi_1 \cdot \xi_2} > 0 $. \\$ $\\
Applying the Routh--Hurwitz criterion \cite{Bodson}, which asserts that the number of roots of the characteristic polynomial with positive real parts (right half-plane roots) corresponds to the number of sign changes in the first column of the Routh array, we find that the first column of the complete Routh scheme is entirely positive. This indicates that all roots of the characteristic polynomial have negative real parts. Consequently, the infection equilibrium point $\mathcal{E}_1$ is locally asymptotically stable.
\subsubsection{Global stability of the infection-free equilibrium} In order to analyze the global stability of the infection-free equilibrium, we will use a  matrix-theoretic method defined in a previous study by Shuai et al.~\cite{shuai}. To apply this method, we need to calculate the matrices $F$ and $V_1$, which are given in Section~4. Using these matrices, we can compute the following matrix:
\begin{equation*}
V_1^{-1}=\begin{pmatrix}
 \frac{\mu_5}{b p+\mu_3 \mu_5} & 0 \\
 \frac{\mu_5 \xi}{b \mu_4 p+\mu_3 \mu_4 \mu_5} & \frac{1}{\mu_4} \\
\end{pmatrix}.
\end{equation*}
We will follow the notation introduced in Theorem 2.1 of \cite{shuai}. This notation involves defining $x=(I,V)$ and $y=(Q,T,Z)$ for the model \eqref{1} we have 
\begin{equation*}
f(x, y):=\left(\begin{array}{c}
I p\left(Z-\frac{b}{\mu_{\mathrm{s}}}\right)+\beta V\left(\frac{\Lambda \sigma}{\mu_1\left(\mu_2+e\right)+\mu_2 \sigma}-T\right)
\end{array}\right) .
\end{equation*}
By adopting this notation, we proceed by applying Theorem 2.1, which states that if \(f(x, y) \geq 0\) in a subset \(\mathcal{D} \subset \mathbb{R}_{+}^{5}\), along with the conditions \(F \geq 0\), \(V_1^{-1} \geq 0\), and \(\mathcal{R}_0 \leq 1\), then the function \(Q = \omega^T V^{-1} x\) can be used as a Lyapunov function for the model. In our case, we apply the theorem by defining $Q=\omega^T V_1^{-1} x$, choosing $\omega^T$ as the left eigenvector of the matrix $V_1^{-1} F$ that corresponds to the basic reproduction number $\mathcal{R}_0$, and differentiating $Q$ along solutions of \eqref{1} to obtain
$$
\begin{aligned}
Q^{\prime} & =\omega^T V_1^{-1} x^{\prime}=\omega^T V_1^{-1}(F-V) x-\omega^T V_1^{-1} f(x, y) \\
& =\left(\mathcal{R}_0-1\right) \omega^T x-\omega^T V_1^{-1} f(x, y) .
\end{aligned}
$$
As seen above, the conditions $F>0$ and $V_1^{-1}>0$ are always true, but the condition $f(x, y)\geq 0$ does not always hold. It is easy to see that the first  coordinate is positive if the conditions
$$ Z(t) > \frac{b}{\mu_5} \quad  \text{and}\quad T (t)< \frac{ \Lambda \sigma}{\mu_1 (\mu_2+\varrho)+\mu_2 \sigma }$$
hold.
For the first  condition, we have 
$$ \frac{d Z}{d t}\geq b-\mu_5 Z.$$ A simple comparison principle shows that $ Z(t) \geq \frac{b}{\mu_5}$ for $t$ large enough.

 For the second condition, we consider the infection-free subsystem 
\begin{align*}
\frac{dT}{dt}&=\Lambda+\varrho T-\sigma Q-\mu_1 Q, \\ 
\frac{dQ}{dt}&=\sigma Q-\varrho T-\mu_2 T.
\end{align*}
We will show that all solutions of this subsystem tend to the equilibrium $(Q^*,T^*)$, where 
$$Q^*\coloneqq \frac{\Lambda  (\mu_2+\varrho )}{\mu_1 (\mu_2+\varrho )+\mu_2 \sigma }\quad \mbox{and}\quad T^*\coloneqq\frac{\Lambda  \sigma }{\mu_1 (\mu_2+\varrho )+\mu_2 \sigma }.$$
To do so, we use the Dulac--Bendixson criterion. This criterion states that if the divergence of a continuously differentiable vector field on a simply connected region does not change sign and is nonzero, then the system cannot admit periodic orbits within that region. In our case, we introduce the positive function $g(Q,T) = \frac{1}{T}$  defined in the region $T > 0$ and $Q > 0$. We  compute the divergence of the vector field multiplied by $g$ as
$$
\operatorname{div}(g \cdot \vec{F}) = -\frac{Q \sigma +T (\mu_1+\sigma )}{T^2}
$$
This expression is negative for all $(Q,T)$ in the phase space. Therefore, by the Dulac--Bendixson criterion, there are no closed orbits and all  positive solutions tend to the unique equilibrium $(Q^*,T^*)$.  Again, using a simple comparison argument, we obtain that $T<T^*$ for $t$ large enough.
\subsection{Uniform persistence}
In this section, we will show that the infection related compartments -- the infected cells  $I(t)$ and the virus particles $V(t)$ -- will persist if $\mathcal{R}_0>1$. In order to state our main result on uniform persistence of $I(t)$ and $V(t)$, we will recall some theory from \cite{smith}.

\begin{definition} Let $X$ be a nonempty set and $\rho\colon X\to \mathbb{R}_+$. 
A semiflow $\Phi\colon  \mathbb{R}_+\times X\to X$  is called \emph{uniformly weakly $\rho$-persistent}, if there exists some $\varepsilon>0$ such that $$\limsup_{t\to\infty}\rho(\Phi(t,x))>\varepsilon\qquad \forall x\in X, \ \rho(x)>0.$$
$\Phi$ is called \emph{uniformly (strongly) $\rho$-persistent} if  there exists some $\varepsilon>0$ such that $$\liminf_{t\to\infty}\rho(\Phi(t,x))>\varepsilon\qquad \forall x\in X, \ \rho(x)>0.$$

A set $M\subseteq X$ is called \emph{weakly $\rho$-repelling} if there is no $x\in X$ such that $\rho(x)>0$ and $\Phi(t,x)\to M$ as $t\to\infty$. 
\end{definition}

System \eqref{1} generates a continuous flow on the  state space $$X:=\{(Q,T,I,V,Z) \in \mathbb{R}_+^5\}\subset \mathbb{R}_+^5 .$$
To keep our notations simple,  we will aplly the notation $x=(Q,T,I,V,Z)\in X$ for the state of the system.  As usual, the $\omega$-limit set of a point $x\in X$ is defined as 
$$\omega(x)\coloneqq\{y \in X: \exists  \{t_n\}_{n \geq 1} \text{ such that } \ t_n \to \infty,  \Phi(t_n,x) \to y \text{ as } n \to \infty\} .$$

\begin{theorem}
$I(t)$ and $V(t)$ are uniformly persistent if $\mathcal{R}_0>1$.
\end{theorem}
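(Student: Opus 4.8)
The plan is to prove the statement in two stages: first show that the semiflow $\Phi$ generated by \eqref{1} is uniformly \emph{weakly} $\rho$-persistent for $\rho(x):=I$, and then upgrade this to uniform (strong) persistence using the boundary‐flow machinery of \cite{smith} together with the compact absorbing set $\mathcal{D}$. Uniform persistence of $V$ will then be immediate from the $V$-equation. It is convenient to note at the outset that every orbit with $I(0)+V(0)>0$ satisfies $I(t)>0$ for all $t>0$: since $\Lambda>0$ forces $Q(t)>0$ and hence $T(t)>0$ for $t>0$, and $V(t)\ge V(0)e^{-\mu_4 t}$, the sign condition $\dot I|_{I=0}=\beta T V>0$ keeps $I$ off zero; all the $\limsup/\liminf$ estimates below are insensitive to this initial transient.

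For the weak persistence I argue by contradiction. Suppose that for arbitrarily small $\varepsilon>0$ there is an orbit with $I(0)+V(0)>0$ and $\limsup_{t\to\infty}I(t)\le\varepsilon$, so that $I(t)\le\varepsilon$ for $t\ge t_0$. Chain the elementary comparisons: (i) $\dot V=\xi I-\mu_4 V$ gives $\limsup_{t\to\infty}V(t)\le\xi\varepsilon/\mu_4$; (ii) $\dot Z\le b-\bigl(\mu_5-\tfrac{c\varepsilon}{\kappa+\varepsilon}\bigr)Z$ gives $\limsup_{t\to\infty}Z(t)\le b/\mu_5+\delta_1(\varepsilon)$; (iii) comparing the $(Q,T)$-subsystem with the cooperative planar system obtained by replacing $\beta T V$ with $(\beta\xi\varepsilon/\mu_4)T$, whose globally attracting equilibrium tends to $(Q_0,T_0)$ — the first two coordinates of $\mathcal{E}_0$ — as $\varepsilon\to0$, gives $\liminf_{t\to\infty}T(t)\ge T_0-\delta_2(\varepsilon)$; here $\delta_1,\delta_2\to0$ as $\varepsilon\to0$. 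Hence, for all sufficiently large $t$,
\begin{equation*}
\dot I\ \ge\ \beta(T_0-\delta)\,V-\bigl(\mu_3+p(b/\mu_5+\delta)\bigr)I,\qquad \dot V\ =\ \xi I-\mu_4 V,
\end{equation*}
with $\delta:=\max\{\delta_1,\delta_2\}$. The coefficient matrix
\begin{equation*}
M_\delta=\begin{pmatrix} -\mu_3-p(b/\mu_5+\delta) & \beta(T_0-\delta)\\ \xi & -\mu_4\end{pmatrix}
\end{equation*}
is irreducible and Metzler, and $M_0$ is exactly the $(I,V)$-block $F+V_1$ of the linearization at $\mathcal{E}_0$; since $\operatorname{tr} M_0<0$ and $\det M_0=\tfrac{\mu_4(bp+\mu_3\mu_5)}{\mu_5}(1-\mathcal{R}_0)<0$ when $\mathcal{R}_0>1$, the spectral abscissa $s(M_0)$ is positive, and by continuity $s(M_\delta)>0$ once $\varepsilon$ is small. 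Comparing $(I(t),V(t))$ from below with the solution of the cooperative linear system $\dot w=M_\delta w$ started from the strictly positive vector $(I,V)$ at that time forces $I(t)\to\infty$, contradicting $I(t)\le\varepsilon$. Therefore $\Phi$ is uniformly weakly $\rho$-persistent.

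To pass to strong persistence, use that $\mathcal{D}$ is compact, positively invariant and absorbing, so $\Phi$ has a compact global attractor. On the boundary face $X_0:=\{x\in X: I=V=0\}$ the dynamics reduce to the infection-free subsystem in $(Q,T,Z)$, whose global attractor is the single point $\mathcal{E}_0$ — precisely what the Dulac--Bendixson and comparison arguments of the global-stability section establish — so the boundary flow is acyclic with the isolated invariant set $\{\mathcal{E}_0\}$. Moreover $\{\mathcal{E}_0\}$ is weakly $\rho$-repelling: an orbit with $\rho>0$ converging to $\mathcal{E}_0$ would eventually satisfy the hypotheses of the contradiction argument with $\varepsilon\to0$, which is impossible. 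The persistence theorem for acyclic boundary flows in \cite{smith} (uniform weak persistence $+$ compact attractor $+$ acyclic, weakly $\rho$-repelling boundary $\Rightarrow$ uniform strong persistence) then yields $\eta>0$ with $\liminf_{t\to\infty}I(t)\ge\eta$ for every orbit with $I(0)+V(0)>0$; finally a last comparison in $\dot V=\xi I-\mu_4 V$ gives $\liminf_{t\to\infty}V(t)\ge\xi\eta/\mu_4>0$, so both $I$ and $V$ are uniformly persistent.

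The main obstacle is the weak-persistence step: one must chain the four comparison estimates so that, for large $t$, the pair $(I,V)$ is trapped above the cooperative linear flow governed by a small perturbation $M_\delta$ of $M_0=F+V_1$, and then verify that $\delta$ can be taken small enough that the positivity of the spectral abscissa — equivalently, the instability of $\mathcal{E}_0$ encoded in $\mathcal{R}_0>1$ — survives the estimates, so that the comparison actually produces exponential growth of $I$. Once weak persistence is secured, the upgrade is routine given the compact absorbing set $\mathcal{D}$ and the already-known triviality of the boundary dynamics.
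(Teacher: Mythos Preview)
Your proof is correct and follows essentially the same route as the paper: both reduce the key step to showing that the $2\times2$ Metzler matrix $M_0$ governing $(I,V)$ near $\mathcal{E}_0$ has positive spectral abscissa when $\mathcal{R}_0>1$, and both invoke the Smith--Thieme framework (compact absorbing set, acyclic boundary consisting of the single equilibrium $\mathcal{E}_0$) to pass from weak to strong persistence. The only cosmetic differences are that the paper takes $\rho=I+kV$ with $k$ the Perron eigenvector component, obtaining the scalar inequality $(I+kV)'\ge K(I+kV)$, while you take $\rho=I$ and use a vector comparison for the cooperative linear system; and the paper gets the $T$ and $Z$ bounds for free from the assumed convergence to $\mathcal{E}_0$ rather than from your chained comparison estimates.
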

\begin{proof}
    Suppose $\mathcal{R}_0>1$ and choose $\rho(x)=I+\frac{\mu_3}{\xi}V$. Define  the infection-free subspace $$X_0 \coloneqq\{x \in X: \rho(x)=0\}=\{(Q,T,0,0,Z)\in \mathbb{R}_+^5\}.$$
It is clear that $X_0$ is invariant and that $\Omega\coloneqq\cup_{x\in X_0}=\{E_0\}$. Following \cite[Chapter~8]{smith}, we first prove weak $\rho$-persistence. Define $M_1=\{E_0\}$, then clearly $M_1\subset\Omega$, it is isolated, compact, invariant and acyclic. To complete the proof of weak $\rho$-persistence, applying \cite[Theorem 8.17]{smith}, we need to show that $M_1$ is weakly $\rho$-repelling. Suppose the contrary, then there exists  solution with its $\omega$-limit set being $M_1$, such that $I+k V>0$ with a constant $k$ to be determined later. From this convergence, we have that for $t$ large enough,
\begin{align*}
    T(t)> \frac{\Lambda  \sigma}{\mu_1 (\mu_2+\varrho )+\mu_2 \sigma }-\varepsilon
    \quad \mbox{and}\quad
    Z(t)<\frac{b}{\mu_5}+\varepsilon
\end{align*}
hold. Then, for $t$ large enough, we can estimate $(I(t)+k V(t))'$ as
\begin{align}\label{pers}
    \left(I(t)+k V(t)\right)'={}&\beta T(t)V(t)-\mu_3 I(t)-p  I(t)Z(t)+k\left(\xi I(t)-\mu_4V(t)\right)\nonumber\\
    \geq{}& \left(\frac{\beta\Lambda\sigma}{\mu_1 (\mu_2+\varrho )+\mu_2 \sigma} -\beta\varepsilon\right)V(t)-\left(\mu_3+p\frac{b}{\mu_5}+p\varepsilon\right)I(t)\nonumber\\
   & +k(\xi I(t)-\mu_4 V(t))\nonumber\\
   ={}&K(I(t)+kV(t))
    \end{align}
for appropriately chosen positive constants $K$ and $k$. From the above calculations we obtain that finding such positive constants $K$ and $k$, enables us to estimate the solution from below by the solution of the equation 
$$(I(t)+kV(t))'=K(I(t)+kV(t)),$$
which would contradict $(I(t)+kV(t))\to 0$.

Finding such constants $K$ and $k$ is equivalent to finding a positive eigenvalue $K$ with positive corresponding eigenvector of the matrix
\begin{equation*}
    \begin{pmatrix}
        -\mu_3-\frac{bp}{\mu_5}-p\varepsilon&\frac{\beta\Lambda\sigma}{\mu_1(\mu_2+\varrho)+\mu_2\sigma}-\beta\varepsilon\\
        \xi& -\mu_4
    \end{pmatrix}.
\end{equation*}
As $\varepsilon$ can be chosen arbitrarily small, by continuity, it is enough to find a positive eigenvalue with corresponding positive eigenvector of the matrix 
\begin{equation*}
    M=\begin{pmatrix}
        -\mu_3-\frac{bp}{\mu_5}&\frac{\beta\Lambda\sigma}{\mu_1(\mu_2+\varrho)+\mu_2\sigma}\\
        \xi& -\mu_4
    \end{pmatrix}.
\end{equation*}
As the off-diagonal elements of $M$ are nonnegative, $M$ is a Metzler matrix, so we can apply \cite[Theorem 11]{luenberger}, which states that for any Metzler matrix $A\in\mathbb{R}^{n\times n}$, the spectral abscissa $\eta( A )$ of A (i.e., the maximum of the real parts of the eigenvalues of $A$)  is an eigenvalue of $A$
and there exists a nonnegative eigenvector $x \geq 0$, $x \neq 0$ such that $Ax =\eta(A ) x$. Hence, we only need to show that $\eta(M)$ is positive if $\mathcal{R}_0>1$. The characteristic polynomial of $M$ takes the form 
$$\lambda ^2+\lambda  \left(\frac{b p}{{\mu_5}}+{\mu_3}+{\mu_4}\right) +\frac{b {\mu_4} p}{{\mu_5}}+{\mu_3} {\mu_4}-\frac{\beta  \Lambda  \xi  \sigma }{{\mu_1} (\mu_2+\rho )+{\mu_2} \sigma }.$$
The first and second coefficients of the characteristic polynomial are positive, while the constant term is negative if and only if $\mathcal{R}_0 > 1$. This is demonstrated by the following proof:
\begin{align*}
    \mathcal{R}_{0} > 1 &\Leftrightarrow \frac{\beta \Lambda \mu_5 \sigma \xi}{\mu_4 (b p + \mu_3 \mu_5) (\mu_1 (\mu_2 + \varrho) + \mu_2 \sigma)} > 1 \\&\Leftrightarrow \frac{\beta \Lambda \sigma \xi}{(\mu_1 (\mu_2 + \varrho) + \mu_2 \sigma)} > \frac{\mu_4 b p + \mu_3 \mu_4 \mu_5}{\mu_5}
\\&\Leftrightarrow \frac{b \mu_4 p}{\mu_5} + \mu_3 \mu_4 - \frac{\beta \Lambda \sigma \xi}{(\mu_1 (\mu_2 + \varrho) + \mu_2 \sigma)} < 0.
\end{align*}
Therefore, if this condition holds, then, according to the Routh--Hurwith theorem, there exists an eigenvalue with positive real part. This implies  the positivity of the spectral
abscissa $\eta( M )$ and the corresponding eigenvector, hence, for $\varepsilon$ sufficiently small (i.e., for  $t$ sufficiently large) we can find  positive constants $K$ and $k$ such that the equality \eqref{pers} holds, which implies the weak persistence of $I(t) + k V(t)$ in the case $\mathcal{R}_0 > 1$. Applying \cite[Theorem 4.5]{smith}, the uniform persistence of   $I(t) + k V(t)$ follows. Simple calculations yield the uniform persistence of $I(t)$ and $V(t)$.
\end{proof}

\subsection{Transcritical bifurcation at \texorpdfstring{$\mathcal{R}_0 =1$}{R0=1}}
In the following, we use the centre manifold theory \cite{Castillo} to explore the possibility of transcritical bifurcation in \eqref{1}. To do so, a bifurcation parameter $\beta^*$ is chosen, by solving for $\beta$ from $\mathcal{R}_0 =1$, giving
$$ \beta^* = \frac{\mu_4 (b p+\mu_3 \mu_5) (\mu_1 (\mu_2+\varrho )+\mu_2 \sigma )}{\Lambda  \mu_5 \sigma  \xi}.$$
The matrix $J_1(\mathcal{E}_0, \beta^*)$ (which is equal to $J_1$ at $\mathcal{E}_0$ with $\beta=\beta^*$) has one simple zero eigenvalue and four negative eigenvalues. 

To calculate the following formulas we introduce the notation $x = (Q,T,I,V,Z)$, 
$$\begin{aligned}
a & =\sum_{k, i, j=1}^5 v_k w_i w_j \frac{\partial^2 f_k}{\partial x_i \partial x_j}(\mathcal{E}_0, \beta^*), \\
b & =\sum_{k, i=1}^5 v_k w_i \frac{\partial^2 f_k}{\partial x_i \partial \beta}(\mathcal{E}_0, \beta^*),
\end{aligned} $$
where $w, v$ are the right and left eigenvectors of $J$ corresponding to the zero eigenvalue defined as follow: 
$$ w = \left\{-\frac{\kappa  \mu_5 \varrho  (b p+\mu_3 \mu_5)}{b c (\mu_1 (\mu_2+\varrho )+\mu_2
   \sigma )},-\frac{\kappa  \mu_5 (\mu_1+\sigma ) (b p+\mu_3 \mu_5)}{b c (\mu_1 (\mu_2+\varrho )+\mu_2 \sigma )},\frac{\kappa  \mu_5^2}{b c},\frac{\kappa  \mu_5^2 \xi}{b c \mu_4},1\right\}.$$
   $$v = \left\{0,0,\frac{\mu_5 \xi}{b p+\mu_3 \mu_5},1,0\right\}.$$
  As $v_1=v_2=v_5=0$ the derivatives of $f_1$, $f_2$ and $f_5$ are not needed. All second order derivatives of $f_3$ and $f_4$ are zero, except for
$$
\begin{aligned}
\frac{\partial^2 f_3}{\partial x_2 \partial x_4} & = \beta^*, & \frac{\partial^2 f_3}{\partial x_3 \partial x_5} & =- p, \\
  \frac{\partial^2 f_2}{\partial x_4 \partial \beta} & = \frac{\Lambda  \sigma }{\mu_1 (\mu_2+\varrho )+\mu_2 \sigma } .
\end{aligned}
$$ 
Therefore, the quantities $a$ and $b$ are given by
$$ a = -2 \left( \frac{ \kappa ^2 \mu_5 ^3 \xi (\mu_1+\sigma)}{(b c)^2 \Lambda \sigma}+ \frac{p \mu_5^3 \xi \kappa}{b c (b p + \mu_3 \mu_5)} \right) $$
$$b = \frac{\kappa \mu_5^3 \xi^2 }{ (b p + \mu_3 \mu_5) bc \mu_4 } \frac{\Lambda  \sigma }{\mu_1 (\mu_2+\varrho )+\mu_2 \sigma } $$
Based on these results, we can derive the following theorem.
\begin{theorem} 
A transcritical bifurcation of forward-type occurs at $\mathcal{R}_0 =1$
\end{theorem}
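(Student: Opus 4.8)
The plan is to invoke the center manifold theorem of Castillo-Chavez and Song \cite{Castillo}, which reduces the classification of the bifurcation at a simple-zero-eigenvalue point to the signs of the two scalars $a$ and $b$ computed above. First I would verify the hypotheses of that theorem at $(\mathcal{E}_0,\beta^*)$. The infection-free equilibrium $\mathcal{E}_0$ is an equilibrium of \eqref{1} for \emph{every} value of $\beta$ (its coordinates do not involve $\beta$), so it persists through the bifurcation. For the spectral hypothesis I would return to the explicit eigenvalues $\lambda_1,\dots,\lambda_5$ of the Jacobian at $\mathcal{E}_0$ obtained in the local stability analysis: $\lambda_1=-\mu_5<0$, and $\lambda_2,\lambda_3<0$ hold unconditionally by the first displayed identity there, while the second identity --- relating $(bp+\mu_5(\mu_3+\mu_4))^2$ to $4\mu_4\mu_5(1-\mathcal{R}_0)(bp+\mu_3\mu_5)$ --- shows that at $\mathcal{R}_0=1$, i.e.\ $\beta=\beta^*$, one has $\lambda_4<0$ while $\lambda_5=0$ is a simple zero eigenvalue. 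This establishes that $J_1(\mathcal{E}_0,\beta^*)$ has exactly one simple zero eigenvalue and four eigenvalues with negative real part, which is assumption A1 of \cite{Castillo}.

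Next I would record a right null vector $w$ and a left null vector $v$ of $J_1(\mathcal{E}_0,\beta^*)$; the explicit vectors displayed above are such a choice, which one checks by direct substitution of $J_1(\mathcal{E}_0,\beta^*)w=0$ and $v\,J_1(\mathcal{E}_0,\beta^*)=0$ (any positive rescaling changes neither the sign of $a$ nor that of $b$, so the normalization is immaterial for the conclusion). Since $v_1=v_2=v_5=0$, only second-order partials of $f_3$ and $f_4$ can contribute to $a$; as $f_4=\xi I-\mu_4 V$ is linear, the surviving terms are $\partial^2 f_3/\partial x_2\partial x_4=\beta^*$ and $\partial^2 f_3/\partial x_3\partial x_5=-p$. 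Likewise the only nonvanishing mixed $x$--$\beta$ derivative feeding $b$ through a nonzero component of $v$ is $\partial^2 f_3/\partial x_4\partial\beta=\Lambda\sigma/(\mu_1(\mu_2+\varrho)+\mu_2\sigma)$, paired with $v_3$ and $w_4$. Substituting $w$ and $v$ into the defining double and single sums reproduces the stated expressions for $a$ and $b$; this step is mechanical once the nonzero derivatives are identified.

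Finally I would read off the signs. Every term inside the parenthesis defining $a$ is a product of positive parameters, so $a<0$, and for the same reason $b>0$. Because $\mathcal{R}_0$ is an increasing affine function of $\beta$, increasing $\beta$ through $\beta^*$ is exactly $\mathcal{R}_0$ crossing $1$ from below, so the bifurcation parameter $\phi=\beta-\beta^*$ is oriented with $\phi<0\Leftrightarrow\mathcal{R}_0<1$. The case $a<0$, $b>0$ of \cite[Theorem 4.1]{Castillo} then asserts that $\mathcal{E}_0$ switches from locally asymptotically stable to unstable as $\mathcal{R}_0$ increases through $1$, while a positive endemic equilibrium bifurcates and is locally asymptotically stable for $\mathcal{R}_0$ slightly above $1$; this is precisely a forward (supercritical) transcritical bifurcation, which is the assertion of the theorem. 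I expect the only delicate points to be the orientation/sign bookkeeping of the bifurcation parameter and confirming that the non-center eigenvalues are genuinely negative at criticality; the evaluation of $a$ and $b$ themselves is routine.
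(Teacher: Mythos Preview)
Your proposal is correct and follows exactly the paper's approach: apply the Castillo--Chavez--Song center manifold theorem at $(\mathcal{E}_0,\beta^*)$, compute the scalars $a$ and $b$ from the displayed eigenvectors and second partials, and conclude from $a<0$, $b>0$ that the bifurcation is forward transcritical. Your write-up is in fact more careful than the paper's in verifying the spectral hypothesis and the parameter orientation, and you correctly identify the relevant mixed derivative as $\partial^2 f_3/\partial x_4\partial\beta$ (the paper's $f_2$ there is evidently a typo, since $v_2=0$).
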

To further illustrate the bifurcation dynamics, we present two figures. The first figure \ref{fig:bif1} demonstrates how the viral load, $V^*$, changes as the transmission rate, $\beta$, increases. The second figure \ref{fig:bif2} shows the behavior of the infected cell population, $I^*$, as a function of $\beta$, incorporating two scenarios: one where the natural death rate of CTL cells, $\mu_5$, exceeds the critical value $c$, and another where $\mu_5$ is below $c$. These plots highlight how $I^*$ varies with increasing $\beta$, clearly demonstrating the occurrence of a transcritical bifurcation at the threshold.
\begin{figure}[htbp]
    \centering
    \includegraphics[width=0.6\linewidth]{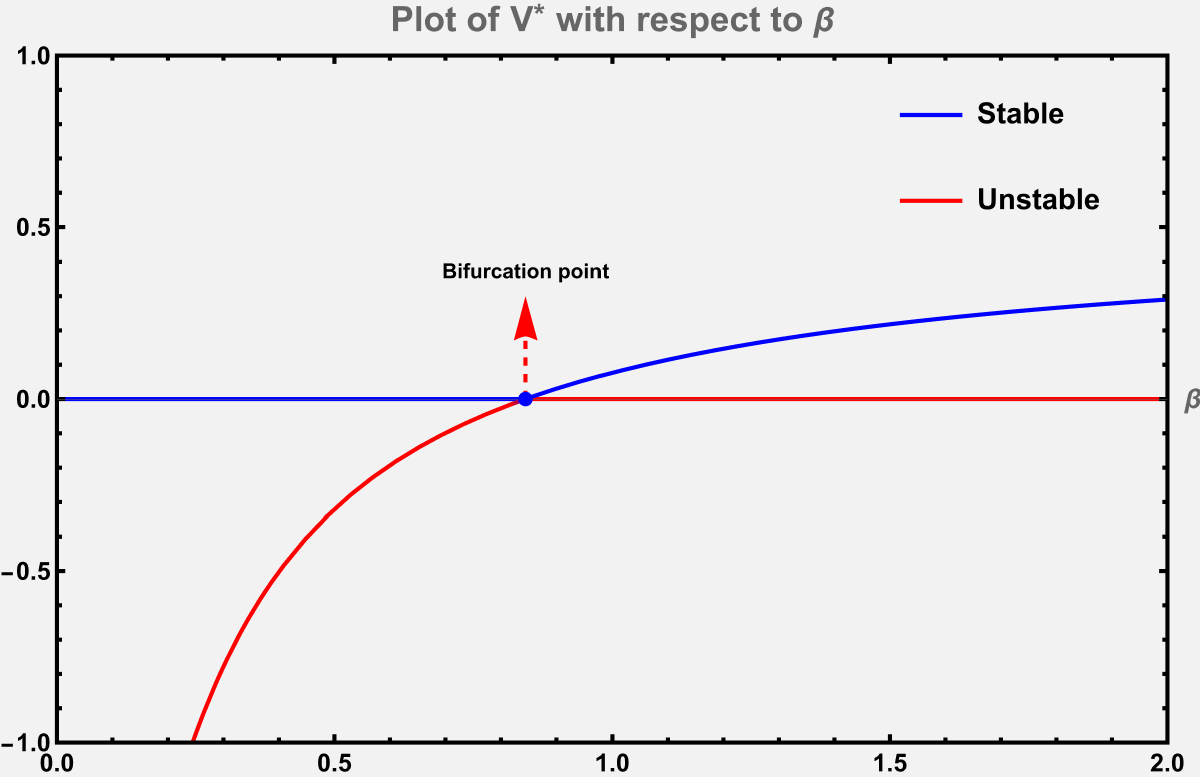}
    \caption{Viral load $V^*$ as a function of the transmission rate $\beta$, illustrating the system's behavior near the bifurcation point. The plot demonstrates how the viral load increases as $\beta$ crosses the critical threshold.}
    \label{fig:bif1}
\end{figure}
\begin{figure}[htbp]
  \begin{minipage}[t]{0.5\textwidth}
    \centering
    \includegraphics[width=\textwidth]{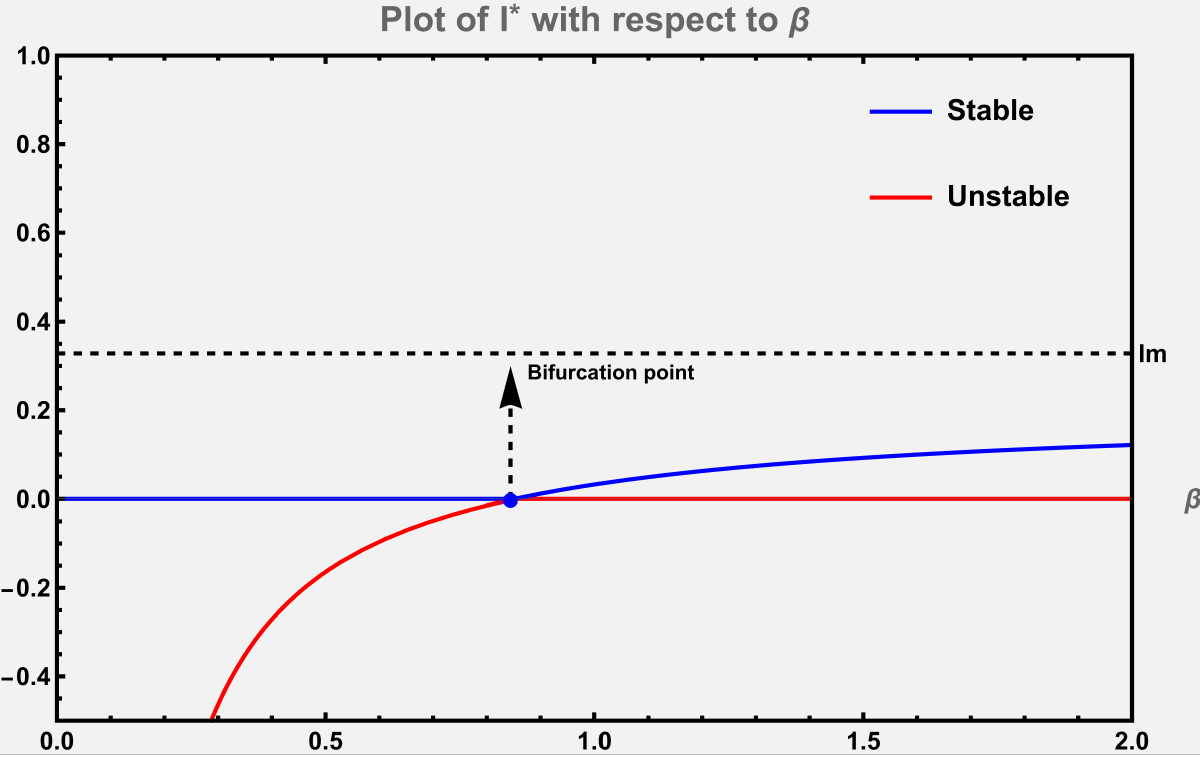} 
  \end{minipage}%
  \begin{minipage}[t]{0.5\textwidth}
    \centering
    \includegraphics[width=\textwidth]{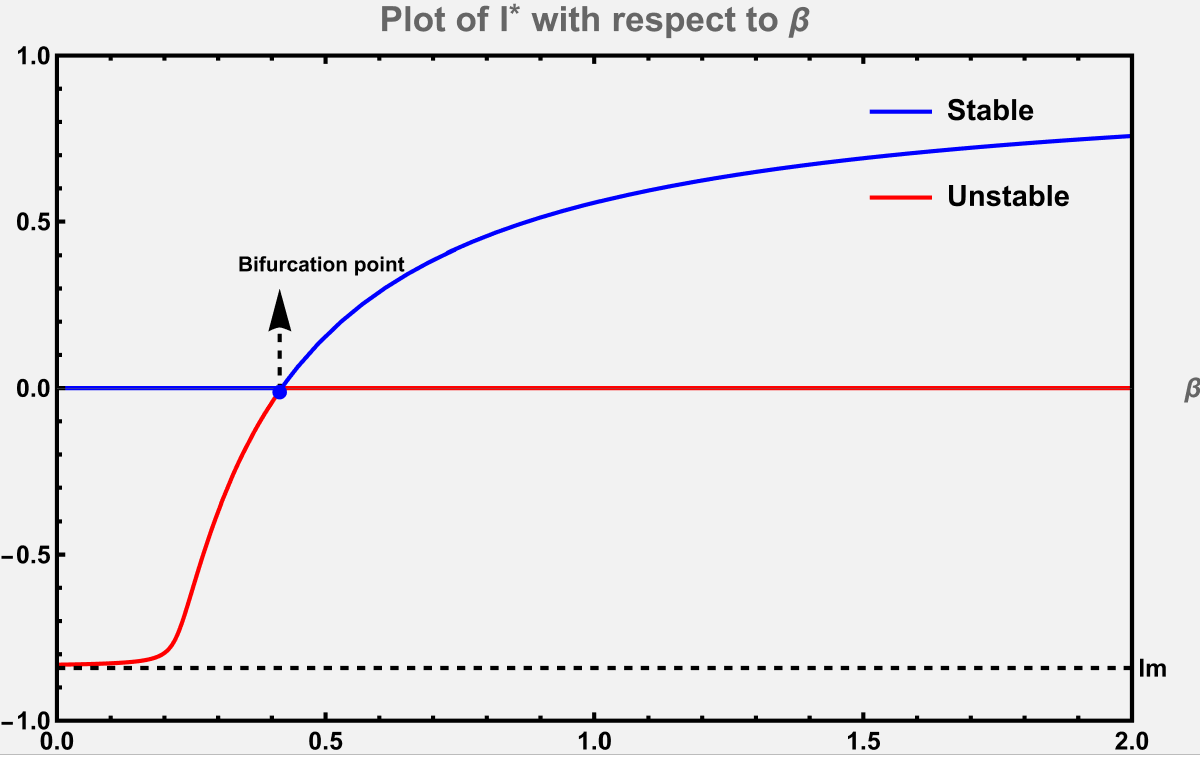} 
  \end{minipage}
  \caption{Infected cell population $I^*$ as a function of the transmission rate $\beta$. The figure shows two cases: one where the natural death rate of CTL cells $\mu_5$ exceeds the critical value $c$, and another where $\mu_5$ is below $c$. The transcritical bifurcation is visible as $\beta$ increases.}
  \label{fig:bif2}
\end{figure}
\section{Numerical simulation}
\subsection{Time series analysis}
 In this section, we present time series analyses for two scenarios: $\mathcal{R}_0 < 1$ and $\mathcal{R}_0 > 1$. The following plots show the time series of  \eqref{1} for these two scenarios.
 \begin{center}
\begin{table}[h!]%
\centering
\caption{Model parameters  and their values in Figures \ref{fig:Num1} and \ref{fig:Num2}.\label{tab3}}%
\begin{tabular*}{240pt}{@{\extracolsep\fill}lcccc@{\extracolsep\fill}}
\toprule
\textbf{Parameter}  & \textbf{Value for Fig.~\ref{fig:Num1}} & \textbf{Value for Fig.~\ref{fig:Num2}}  \\
\midrule
$\Lambda$    & 100 & 0.1     \\
$\varrho$   & 0.1 & 0.1     \\
$ \sigma$   & 0.2 & 0.2   \\
 $\mu_1$  & 0.01 & 0.01  \\
$ \beta$  & 0.3 & 0.3 \\
$ \mu_2$  & 0.01 & 0.01 \\
$\mu_3$  & 0.1 & 0.1 \\
$p$   & 0.05 & 0.05 \\
$\xi$  & 10 & 0.01 \\
$\mu_4$  & 0.1 & 0.1 \\
$b$  & 1 & 1 \\
$c$  & 0.1 & 0.1 \\
$\kappa$  & 50 & 50 \\
$\mu_5$  & 0.05 & 0.05 \\
$Q(0)$   & 10 & 10 \\
$T(0)$   & 1 & 1 \\
$I(0)$   & 10 & 10 \\ 
$V(0)$   & 0 & 0 \\
$Z(0)$   & 10 & 10 \\
\bottomrule
\end{tabular*}
\end{table}
\end{center}
\begin{figure}[htbp]
  \centering
  \includegraphics[width=\textwidth]{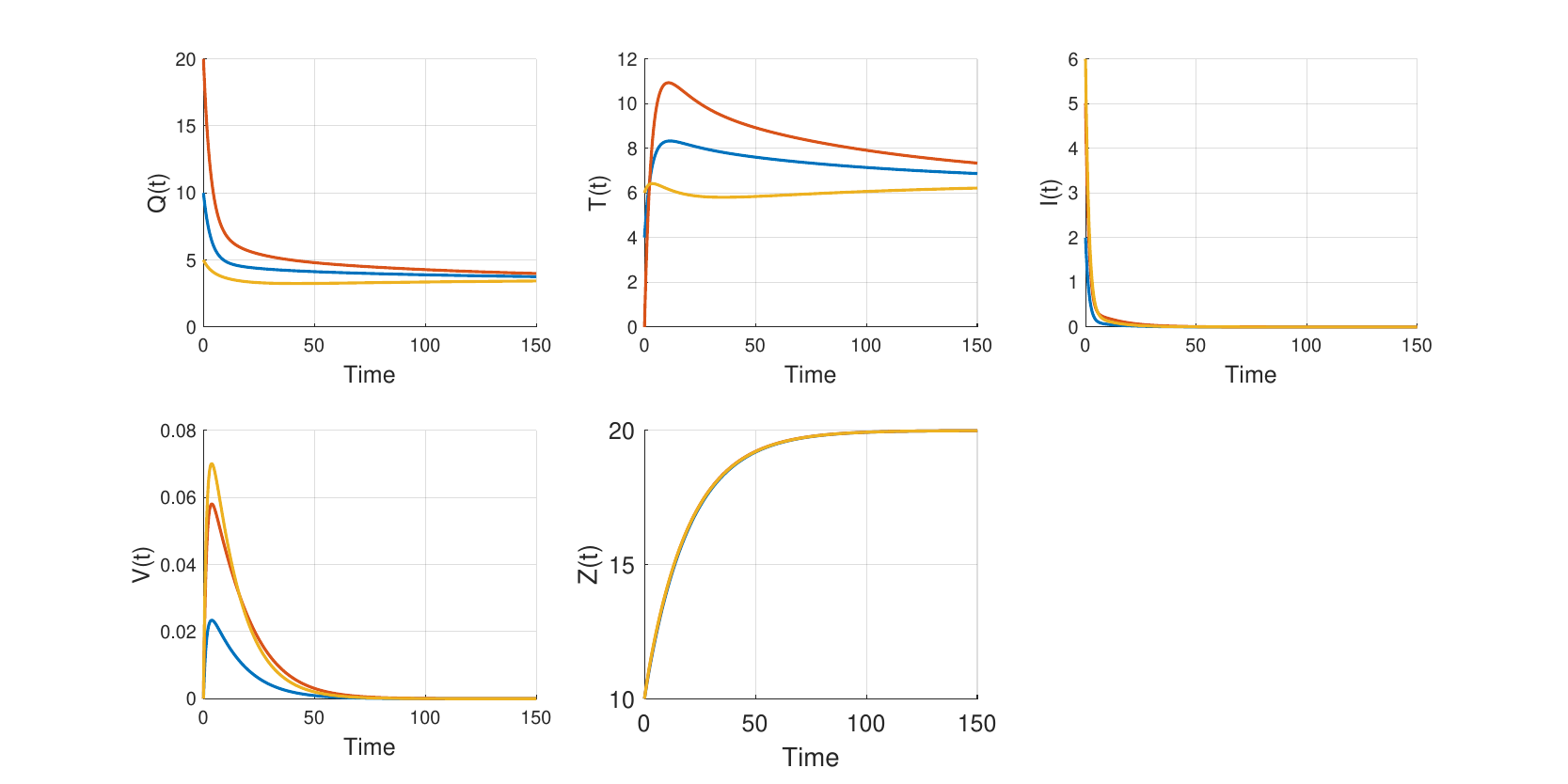}
  \caption{Time series plot for $\mathcal{R}_0 < 1$ scenario.}
  \label{fig:Num1}
\end{figure}

\begin{figure}[htbp]
  \centering
  \includegraphics[width=\textwidth]{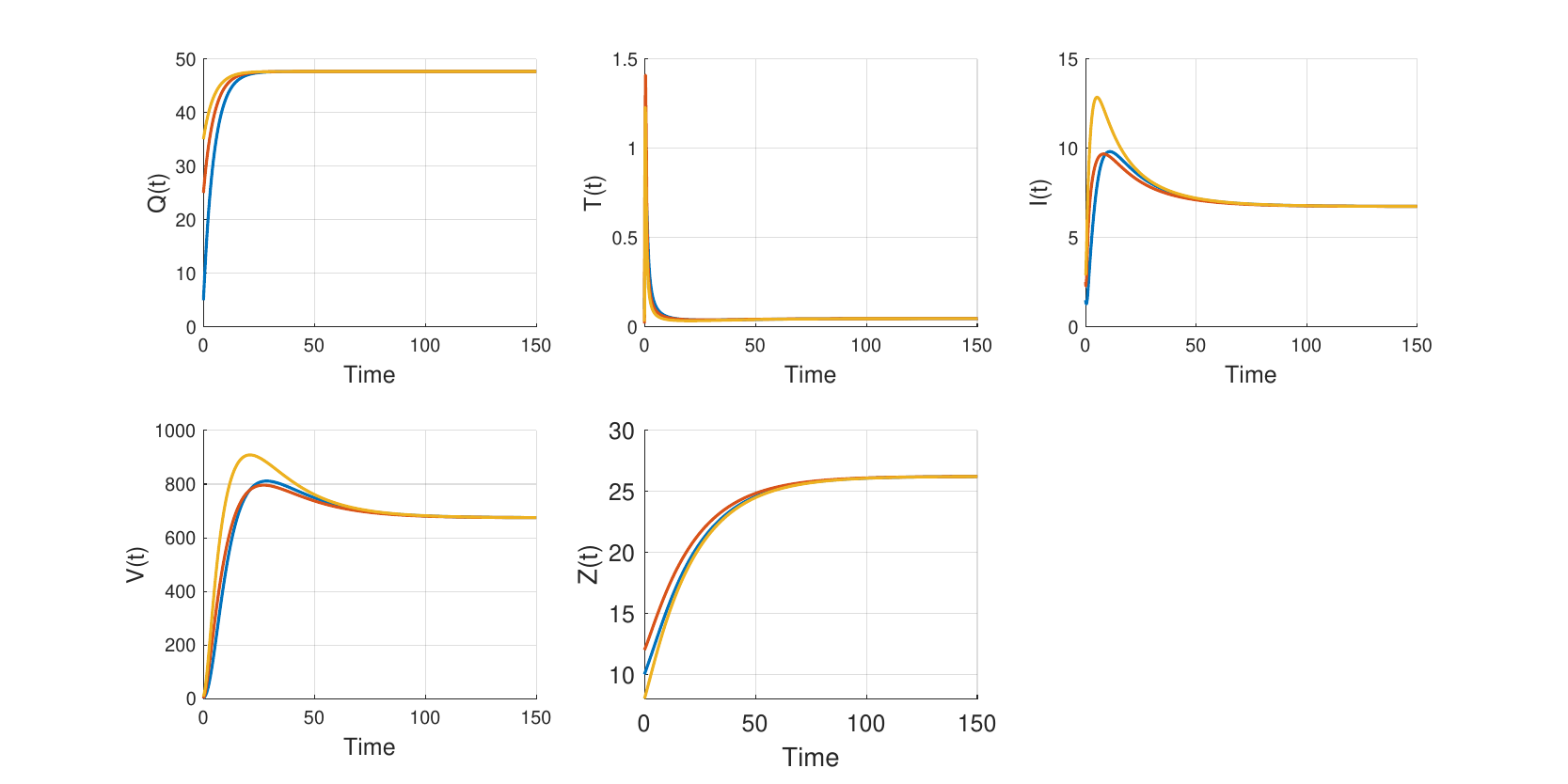}
  \caption{Time series plot for $\mathcal{R}_0 > 1$ scenario.}
  \label{fig:Num2}
\end{figure}
According to Figures \ref{fig:Num1} and \ref{fig:Num2}, when the basic reproductive number, $\mathcal{R}_0$, is less than 1, and when $\mathcal{R}_0$ is greater than 1. In the scenario where $\mathcal{R}_0$ is less than 1, we observed a distinct pattern in the time series analysis. The quiescent cells exhibit a decreasing trend, eventually converging to a stable value. Simultaneously, the number of healthy activated cells, representing cells that respond to the viral infection, increases and converges to a steady state. The infected cells gradually decrease and eventually diminish to zero, while the free virus particles initially increase, reach a peak, and then decrease, eventually stabilizing. Additionally, the immune response, represented by the CTL (cytotoxic T-lymphocyte) cells, shows an increasing trend throughout the observation period.\\$ $\\
On the other hand, when $\mathcal{R}_0$ is greater than 1, the time series analysis reveals a different behavior. Quiescent cells exhibit an increasing trend, suggesting a larger pool of inactive cells. In contrast, the number of healthy activated cells, representing the target cells for viral infection, experiences a decreasing trend. The infected cells initially increase, reach a peak, and then gradually converge to a stable value. Similarly, the free virus particles display an increasing trend, reaching a peak value, and subsequently stabilizing. Notably, the immune response, characterized by the CTL cells, exhibits a consistent increase throughout the observation period, indicating an intensified effort to combat the infection and control the spread of the virus.
\subsection{Analysis of the reproduction number's sensitivity}

In this subsection, we carry out a sensitivity analysis to explore how different parameters impact the basic reproduction number, $\mathcal{R}_0$. Using Partial Rank Correlation Coefficients (PRCC) analysis, we can evaluate the influence of each parameter on $\mathcal{R}_0$. Parameters with positive PRCC values show a direct relationship, where increasing them raises $\mathcal{R}_0$, while those with negative PRCC values are inversely related, meaning that an increase in their values lowers the reproduction number. According to our results, displayed in Figure~\ref{fig:PRCC}, the natural death rate of CTL cells ($\mu_5$) exerts the strongest positive effect on $\mathcal{R}_0$, followed by the transition rate of healthy activated cells to quiescent cells ($\varrho$). Meanwhile, the infection rate of healthy activated cells by free virus ($\beta$) and the production rate of CTL cells ($b$) were found to most effectively reduce $\mathcal{R}_0$.
 \begin{figure}[htbp]
 
{\hspace{-12pt}\includegraphics[width=0.97\textwidth]{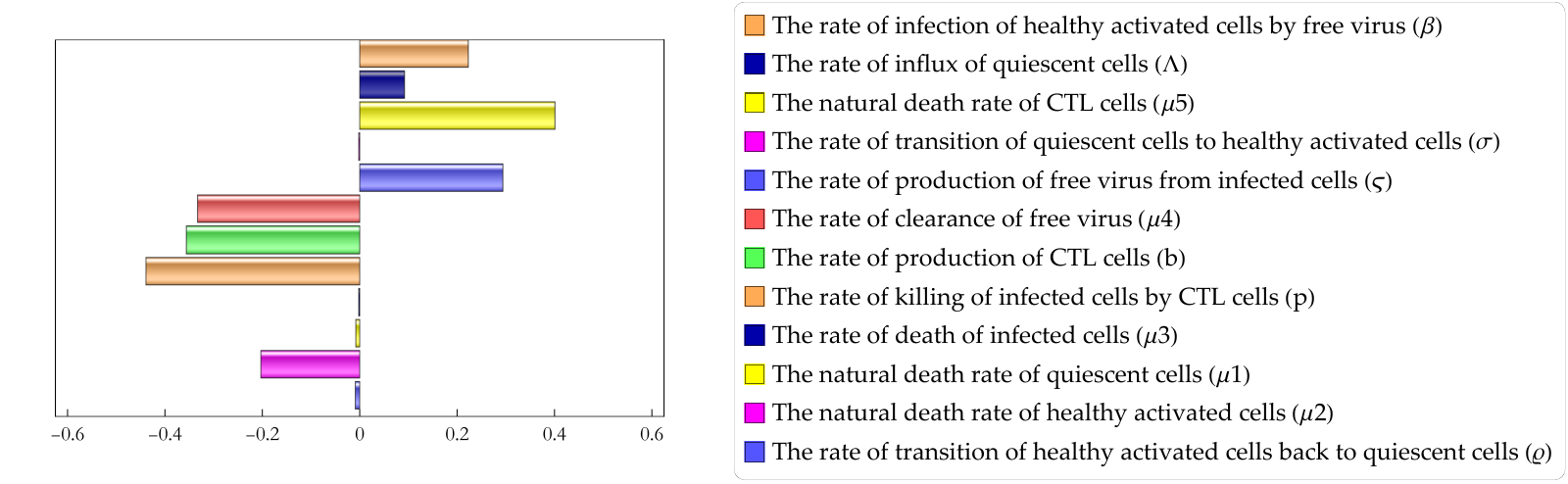}}
    \caption{Partial rank correlation coefficients of parameters of model \eqref{1}.}
    \label{fig:PRCC}
\end{figure} 
\section{Discussion and conclusion }
In this paper, we presented a comprehensive analysis of an HIV infection model that incorporates quiescent cells and the host immune response. Our model, based on a system of ordinary differential equations, captures the complex dynamics of viral infection within the host, offering insights into the equilibrium states, stability, and bifurcation phenomena. Through rigorous mathematical analysis, we explored both the infection-free and infection equilibria, providing a detailed understanding of how these equilibria govern the long-term behavior of the system.

The infection-free equilibrium represents the scenario where the virus is eradicated, and our results demonstrated that this equilibrium is both locally and globally stable when the basic reproduction number $\mathcal{R}_0$ is less than one. This implies that under appropriate conditions, the virus cannot persist in the host population. Conversely, when $\mathcal{R}_0$ exceeds one, the system tends toward the infection equilibrium, which we found to be locally asymptotically stable. This indicates that in cases where the virus establishes itself, the infection will persist over time unless effective interventions are implemented.

Moreover, we identified a transcritical bifurcation at the critical threshold $\mathcal{R}_0 = 1$. This bifurcation signifies a qualitative change in the system's dynamics, where the stability of the infection-free equilibrium is lost, and the infection equilibrium becomes stable. This critical point plays a crucial role in determining the system's response to changes in parameters, highlighting the importance of maintaining control over the reproduction number to prevent disease outbreaks.

In addition to the equilibrium and stability analysis, we performed a sensitivity analysis, showed that $\mu_5$ and $\varrho$ increase $\mathcal{R}_0$, while $\beta$ and $b$ decrease it.

Overall, our model provides valuable insights into the interactions between HIV, quiescent cells, and the immune system, shedding light on the factors that influence the persistence or eradication of the virus. The identification of the bifurcation point and the sensitivity analysis provide useful information for designing interventions aimed at controlling the infection. 

In conclusion, this research has broadened our understanding of HIV infection dynamics by incorporating quiescent cells and immune response mechanisms into the model. Our findings reveal critical thresholds and parameter sensitivities that can guide the development of more effective treatment strategies.    
\newpage
\appendix
\section*{Appendix}
\allowdisplaybreaks
\allowdisplaybreaks
\begin{align*}
 \MoveEqLeft[3] \xi_3-\frac{\xi_l \left(\xi_5-\xi_1  \xi_4\right)}{\xi_3-\xi_1 \cdot \xi_2} = \\  & \bigg(\big((\alpha_5+\mu_1+\sigma ) \alpha_2^2+\big((\alpha_5+\mu_1)^2+2 \sigma  (\alpha_5+\mu_1)+\sigma ^2+\alpha_6 \mu_4-\rho  \sigma \big) \alpha_2\\
 &+(\mu_1+\sigma ) \big(\alpha_5^2+(\mu_1+\sigma ) \alpha_5-\rho  \sigma \big)\big) \alpha_1^3+\big((\alpha_5+\mu_1+\sigma ) \alpha_2^3\\
 &+\big(2 \alpha_5^2+4 \mu_1 \alpha_5+3 \mu_4 \alpha_5+2 \mu_1^2+2 \sigma ^2+\alpha_3 \alpha_4+\alpha_6 \mu_4+3 \mu_1 \mu_4+(4 \alpha_5+4 \mu_1+3 \mu_4-\rho ) \sigma \big) \alpha_2^2\\
 &+\big(\alpha_5^3+4 \mu_1 \alpha_5^2+3 \mu_4 \alpha_5^2+4 \mu_1^2 \alpha_5-\alpha_6 \mu_4 \alpha_5+6 \mu_1 \mu_4 \alpha_5+\mu_1^3+\sigma ^3+2 \alpha_6 \mu_4^2\\
 &+(4 \alpha_5+3 (\mu_1+\mu_4)-2 \rho ) \sigma ^2+3 \mu_1^2 \mu_4-\alpha_6 \mu_1 \mu_4+\big(4 \alpha_5^2+8 \mu_1 \alpha_5+6 \mu_4 \alpha_5+3 \mu_1^2-\alpha_6 \mu_4\\
 &+6 \mu_1 \mu_4-(\alpha_5+2 \mu_1+3 \mu_4) \rho \big) \sigma +2 \alpha_3 \alpha_4 (\alpha_5+\mu_1+\sigma )\big) \alpha_2+\alpha_5 \mu_1^3+(\alpha_5-\rho ) \sigma ^3+2 \alpha_5^2 \mu_1^2\\
 &-\alpha_6^2 \mu_4^2+\big(2 \alpha_5^2+3 (\mu_1+\mu_4) \alpha_5-\rho  \alpha_5-2 \mu_1 \rho -\mu_4 (\alpha_6+3 \rho )\big) \sigma ^2+\alpha_5^3 \mu_1+3 \alpha_5 \mu_1^2 \mu_4\\
 &-\alpha_6 \mu_1^2 \mu_4-\alpha_5^2 \alpha_6 \mu_4+3 \alpha_5^2 \mu_1 \mu_4-2 \alpha_5 \alpha_6 \mu_1 \mu_4+\big(\alpha_5^3+4 \mu_1 \alpha_5^2+3 \mu_4 \alpha_5^2+3 \mu_1^2 \alpha_5\\
 &-2 \alpha_6 \mu_4 \alpha_5+6 \mu_1 \mu_4 \alpha_5-2 \alpha_6 \mu_1 \mu_4+\alpha_6 \mu_4 \rho -\mu_1 (\alpha_5+\mu_1+3 \mu_4) \rho \big) \sigma +\alpha_3 \alpha_4 \big((\mu_1+\sigma )^2+\alpha_6 \mu_4\\
 &+\alpha_5 (2 \mu_1+\mu_4+2 \sigma )\big)\big) \alpha_1^2+\big((\alpha_3 \alpha_4+(\alpha_5+\mu_1+\sigma ) (\alpha_5+\mu_1+2 \mu_4+\sigma )) \alpha_2^3\\
 &+\big(\alpha_5^3+4 (\mu_1+\mu_4+\sigma ) \alpha_5^2+\big(4 \mu_1^2+8 (\mu_4+\sigma ) \mu_1+3 \mu_4^2+4 \sigma ^2+8 \mu_4 \sigma -2 \rho  \sigma \big) \alpha_5\\
 &+\mu_1^3+\sigma ^3+\alpha_6 \mu_4^2+3 \mu_1 \mu_4^2+(3 \mu_1+4 \mu_4-2 \rho ) \sigma ^2+4 \mu_1^2 \mu_4+\big(3 \mu_1^2+8 \mu_4 \mu_1+3 \mu_4^2-\\
 &2 (\mu_1+\mu_4) \rho \big) \sigma +\alpha_3 \alpha_4 (2 \alpha_5+\mu_1+\mu_4+\sigma )\big) \alpha_2^2+\big(2 (\mu_1+\mu_4+\sigma ) \alpha_5^3+\big(4 \mu_1^2+8 (\mu_4+\sigma ) \mu_1\\
 &+3 \mu_4^2+4 \sigma ^2-\alpha_6 \mu_4+8 \mu_4 \sigma -\rho  \sigma \big) \alpha_5^2+\big(2 \mu_1^3+(8 \mu_4+6 \sigma ) \mu_1^2+\big(6 \mu_4^2-2 \alpha_6 \mu_4+16 \sigma  \mu_4\\&+6 \sigma ^2-4 \rho  \sigma \big) \mu_1-\alpha_6 \mu_4 (\mu_4+2 \sigma )+2 \sigma  \big(3 \mu_4^2-\rho  \mu_4+4 \sigma  \mu_4+\sigma ^2-2 \rho  \sigma \big)\big) \alpha_5+2 (\mu_4-\rho ) \sigma ^3\\
 &+\alpha_3^2 \alpha_4^2+\big(\rho ^2-4 (\mu_1+\mu_4) \rho -\alpha_6 \mu_4+3 \mu_4 (2 \mu_1+\mu_4)\big) \sigma ^2+\mu_4 \big(\mu_1^2 (2 \mu_1+3 \mu_4)\\
 &-\alpha_6 \big(\mu_1^2+\mu_4 \mu_1-\mu_4^2\big)\big)-\big(-6 \mu_1 \mu_4 (\mu_1+\mu_4)+\big(2 \mu_1^2+4 \mu_4 \mu_1+3 \mu_4^2\big) \rho\\
 & +\alpha_6 \mu_4 (2 \mu_1+\mu_4+\rho )\big) \sigma +\alpha_3 \alpha_4 \big(2 \alpha_5^2+4 \mu_1 \alpha_5+3 \mu_4 \alpha_5+\mu_1^2+\sigma ^2-\alpha_6 \mu_4+2 \mu_1 \mu_4\\
 &+2 (2 \alpha_5+\mu_1+\mu_4+\rho ) \sigma \big)\big) \alpha_2+\big(\alpha_5^2+2 \mu_4 \alpha_5+\rho ^2-\alpha_6 \mu_4-2 (\alpha_5+\mu_4) \rho \big) \sigma ^3\\
 &+\big(\alpha_5^3+(3 \mu_1+4 \mu_4-\rho ) \alpha_5^2+\mu_4 (-\alpha_6+6 \mu_1+3 \mu_4) \alpha_5-2 (2 \mu_1+\mu_4) \rho  \alpha_5-\alpha_6 \mu_4 (3 \mu_1+\mu_4+\rho )\\
 &+\rho  \big(-3 \mu_4^2-4 \mu_1 \mu_4+\mu_1 \rho \big)\big) \sigma ^2+(\alpha_5+\mu_1) \big(\big(\mu_1^2+2 \mu_4 \mu_1-\alpha_6 \mu_4\big) \alpha_5^2\\
 &+\mu_4 \big(2 \mu_1^2+3 \mu_4 \mu_1-\alpha_6 \mu_4\big) \alpha_5-\alpha_6 \mu_1 \mu_4 (\mu_1+\mu_4)\big)+\big(2 (\mu_1+\mu_4) \alpha_5^3\\
 &+\big(3 \mu_1^2+8 \mu_4 \mu_1-\rho  \mu_1+3 \mu_4^2-\alpha_6 \mu_4\big) \alpha_5^2-2 (\alpha_6-3 \mu_1) \mu_4 (\mu_1+\mu_4) \alpha_5\\
 &+(\alpha_6 \mu_4-2 \mu_1 (\mu_1+\mu_4)) \rho  \alpha_5-\mu_4 \big(\mu_1 (2 \mu_1+3 \mu_4) \rho +\alpha_6 \big(3 \mu_1^2+2 \mu_4 \mu_1+\rho  \mu_1-\mu_4 \rho \big)\big)\big) \sigma \\
 &+\alpha_3^2 \alpha_4^2 (\mu_1+\mu_4+\sigma )+\alpha_3 \alpha_4 \big(\sigma ^3+(3 \mu_1+\mu_4+2 \rho ) \sigma ^2-\alpha_6 \mu_4 \sigma +\mu_1 (3 \mu_1+2 (\mu_4+\rho )) \sigma \\
 &+(\mu_1+\mu_4) \big(\mu_1^2-\alpha_6 \mu_4\big)+\alpha_5^2 (2 \mu_1+\mu_4+2 \sigma )+\alpha_5 \big(2 \mu_1^2+(3 \mu_4+4 \sigma ) \mu_1+2 \sigma ^2\\
 &+\mu_4 (\alpha_6+2 \mu_4)+3 \mu_4 \sigma \big)\big)\big) \alpha_1+(\alpha_3 \alpha_4 \alpha_5+(\alpha_5+\mu_4) (\alpha_5-\rho ) (\mu_4-\rho )) \sigma ^3\\
 &+\big(\alpha_3 \alpha_4 \alpha_5 (\alpha_5+3 \mu_1+\mu_4+2 \rho )+(\alpha_5+\mu_4) \big(\mu_1 \rho ^2-\big(\alpha_5^2+2 \mu_1 \alpha_5+\mu_4 (2 \mu_1+\mu_4)\big) \rho\\
 & +\alpha_5 \mu_4 (\alpha_5+3 \mu_1+\mu_4)\big)\big) \sigma ^2+\alpha_5 (\alpha_3 \alpha_4+\mu_1 (\alpha_5+\mu_1)) (\mu_1+\mu_4) (\alpha_3 \alpha_4+\mu_4 (\alpha_5+\mu_4))\\
 &+\big(\alpha_5 (\alpha_3 \alpha_4+\mu_4 (\alpha_5+\mu_4)) (\alpha_3 \alpha_4+\alpha_5 (2 \mu_1+\mu_4)+\mu_1 (3 \mu_1+2 \mu_4))-\mu_1 \big((\alpha_5+\mu_4) \big(\alpha_5^2\\
 &+\mu_1 \alpha_5+\mu_4 (\mu_1+\mu_4)\big)-2 \alpha_3 \alpha_4 \alpha_5\big) \rho \big) \sigma +\alpha_2^3 (\alpha_3 \alpha_4 \alpha_5+(\alpha_5+\mu_4) (\alpha_5+\mu_1+\sigma ) (\mu_1+\mu_4+\sigma ))\\
 &+\alpha_2^2 \big(\alpha_3 \alpha_4 \alpha_5 (\alpha_5+\mu_1+\mu_4+\sigma )+(\alpha_5+\mu_4) \big(\sigma ^3+(2 \alpha_5+3 \mu_1+2 \mu_4-2 \rho ) \sigma ^2\\
 &+\big(\alpha_5^2+4 \mu_1 \alpha_5+3 \mu_4 \alpha_5+3 \mu_1^2+\mu_4^2+4 \mu_1 \mu_4-(\alpha_5+2 \mu_1+\mu_4) \rho \big) \sigma \\
 &+(\alpha_5+\mu_1) (\mu_1+\mu_4) (\alpha_5+\mu_1+\mu_4)\big)\big)+\alpha_2 \big(\alpha_3^2 \alpha_5 \alpha_4^2+\alpha_3 \alpha_5 \big(\sigma ^2+2 (\alpha_5+\mu_1+\mu_4+\rho ) \sigma \\
 &+(\mu_1+\mu_4) (2 \alpha_5+\mu_1+\mu_4)\big) \alpha_4+(\alpha_5+\mu_4) \big((\alpha_5+\mu_4-2 \rho ) \sigma^3+\big(\alpha_5^2+3 \mu_1 \alpha_5+3 \mu_4 \alpha_5+\mu_4^2\\
 &+\rho ^2+3 \mu_1 \mu_4-2 (\alpha_5+2 \mu_1+\mu_4) \rho \big) \sigma ^2+\big((2 (\mu_1+\mu_4)-\rho ) \alpha_5^2+\big(3 \mu_1^2+6 \mu_4 \mu_1-2 \rho  \mu_1+2 \mu_4^2\big) \alpha_5\\
 &+\mu_1 \mu_4 (3 \mu_1+2 \mu_4)-\big(2 \mu_1^2+2 \mu_4 \mu_1+\mu_4^2\big) \rho \big) \sigma +(\alpha_5+\mu_1) (\mu_1+\mu_4) (\mu_1 \mu_4\\
 &+\alpha_5 (\mu_1+\mu_4))\big)\big)\Bigg)\Big/\Bigg((\alpha_2+\alpha_5+\mu_1+\sigma ) \alpha_1^2+\big(\alpha_2^2+2 (\alpha_5+\mu_1+\mu_4+\sigma ) \alpha_2+\alpha_5^2+\alpha_3 \alpha_4\\
 &-\alpha_6 \mu_4+2 \alpha_5 (\mu_1+\mu_4+\sigma )+(\mu_1+\sigma ) (\mu_1+2 \mu_4+\sigma )\big) \alpha_1+\alpha_5 \mu_1^2+\alpha_5 \mu_4^2+\mu_1 \mu_4^2\\
 &+(\alpha_5+\mu_4-\rho ) \sigma ^2+\alpha_3 \alpha_4 \alpha_5+\alpha_5^2 \mu_1+\alpha_5^2 \mu_4+\mu_1^2 \mu_4+2 \alpha_5 \mu_1 \mu_4+(\alpha_5+\mu_4) (\alpha_5+2 \mu_1+\mu_4) \sigma\\
 & -\mu_1 \rho  \sigma +\alpha_2^2 (\alpha_5+\mu_1+\mu_4+\sigma )+\alpha_2 \big((\alpha_5+\mu_1+\mu_4)^2+2 \sigma  (\alpha_5+\mu_1+\mu_4)+\sigma ^2-\rho  \sigma \big)\Bigg)
   \end{align*}
\section*{Acknowledgements}
This research was supported by project TKP2021-NVA-09, implemented with the support provided by the Ministry of Innovation and Technology of Hungary from the National Research, Development and Innovation Fund, financed under the TKP2021-NVA funding~scheme. I.N. was supported
by the Stipendium Hungaricum scholarship with Application No.~403679. A.D. was supported by the National Laboratory for Health Security, RRF-2.3.1-21-2022-00006 and by the project No.~129877, implemented with the support provided from the National Research, Development and Innovation Fund of Hungary, financed under the KKP\_19 funding
scheme. A.T. was supported by UAEU UPAR grant number 12S125.
\section*{Data availability statement}
Data sharing is not applicable to this article as no new data were created or analyzed in this study.

\end{document}